\newtheorem{lemma}{Lemma}
\begin{document}

\title{Beam Management in Low Earth Orbit Satellite Communication With Handover Frequency Control and Satellite-Terrestrial Spectrum Sharing }

\author{Yaohua~Sun,~Jianfeng~Zhu,~and~Mugen~Peng,~\IEEEmembership{Fellow,~IEEE}
	\thanks{Jianfeng Zhu (jianfeng@bupt.edu.cn), Yaohua Sun (sunyaohua@bupt.edu.cn), and Mugen Peng (pmg@bupt.edu.cn) are with the State Key Laboratory of Networking and Switching Technology, Beijing University of Posts and Telecommunications, Beijing 100876, China. (\bf{Corresponding author: Mugen Peng})}}
\maketitle

\begin{abstract}

To achieve ubiquitous wireless connectivity, low earth orbit (LEO) satellite networks have drawn much attention.
However, effective beam management is challenging due to time-varying cell load, high dynamic network topology, and complex interference situations.
In this paper, under inter-satellite handover frequency and satellite-terrestrial/inter-beam interference constraints, we formulate a practical beam management problem, aiming to maximize the long-term service satisfaction of cells.
Particularly, Lyapunov framework is leveraged to equivalently transform the primal problem into multiple single epoch optimization problems, where virtual queue stability constraints replace inter-satellite handover frequency constraints.
Since each single epoch problem is NP-hard, we further decompose it into three subproblems, including inter-satellite handover decision, beam hopping design and satellite-terrestrial spectrum sharing.
First, a proactive inter-satellite handover mechanism is developed to balance handover frequency and satellite loads.
Subsequently, a beam hopping design algorithm is presented based on conflict graphs to achieve interference mitigation among beams, and then a flexible satellite-terrestrial spectrum sharing algorithm is designed to satisfy the demands of beam cells and improve spectral efficiency.
Simulation results show that our proposal significantly improves service satisfaction compared with baselines, where the average data queue length of beam cells is reduced by over 50$\%$ with affordable handover frequency.

\end{abstract}

\begin{IEEEkeywords}
Low earth orbit satellite, beam management, inter-satellite handover, interference mitigation, spectrum sharing.
\end{IEEEkeywords}

\section{Introduction}

In last decade, rapid technical developments have facilitated the deployments and applications of LEO satellite networks~\cite{background2,background3,background4}.
Although LEO satellite networks can provide seamless service, effective beam management is highly challenging.
Specifically, due to uneven traffic distributions, limited radio resource, and high dynamic topology, LEO satellite beam management need to deal with frequent inter-satellite handover\cite{handover_twc}, unsatisfied capacity by beam hopping\cite{multi_sat2}, and severe inter-beam interference.
Meanwhile, considering the shortage of available spectrum resource, satellite-terrestrial spectrum sharing is a promising direction\cite{space_spectrum_sharing1}.

\subsection{Related works}
\subsubsection{Inter-satellite handover}
Low handover frequency can reduce signaling overhead and provide continuous service experience.
In \cite{handover_rssi1}, users monitor the received signal strength, and handover procedure is performed when the measured signal strength is lower than a certain threshold.
However, it is intractable to judge appropriate handover occasions because near-far distance effects are insignificant in LEO satellite networks\cite{handover_twc}.
Therefore, several other handover criteria are proposed, including longest remaining service time~\cite{handover_graph}, shortest distance~\cite{min_distance} and load balance~\cite{{handover_load1}}.
In contrast to adopting single criterion with partial information for handover, Zhang \emph{et al}. in \cite{multi-attribute} develop a multi-attribute handover decision scheme based on entropy by combining multiple handover criteria.
Although the above works can reduce handover frequency, handover decisions are mainly made by users with local observations about LEO satellite networks, which may result in handover to a satellite with high load.
To overcome this issue, a centralized handover strategy is proposed in \cite{handover_centralized} by searching the minimum cost and maximum flow in a binary graph between satellites and users.
Wang \emph{et al}. in \cite{handover_twc} propose a conditional handover {triggering} mechanism to enhance service continuity, where gateway stations reallocate serving satellites for all users when network topology changes.
{The simulation results show that a centralized handover strategy with proactive handover triggering mechanism can improve service continuity.
\textbf{Unfortunately, since current inter-satellite handover mechanisms are only triggered when network topology changes, they are shortsighted and cannot adapt to time-varying traffic arrival, which may lead to load imbalance among satellites and degrade network throughput.}}

\subsubsection{Beam hopping}
With inevitable challenges caused by uneven transmission demand distribution and potential inter-beam interference, industry and academia have adopted beam hopping as a promising approach to fully utilize time-frequency resource and support inter-beam spectrum sharing.
Overall, beam hopping research can be divided into two categories, including intra-satellite beam hopping strategies \cite{one_sat2,one_sat3} and inter-satellite cooperative beam hopping strategies\cite{multi_sat2,multi_sat3}.
Authors in \cite{one_sat2} propose a swap matching algorithm to reduce the gap between traffic requests and capacity provision.
In \cite{one_sat3}, beam cells are dynamically divided into multiple clusters with comparable loads, and then a beam hopping strategy is developed under inter-cluster interference isolation.
Compared to scenarios with a single satellite, inter-satellite beam hopping design not only requires intra-satellite interference mitigation but also needs to consider inter-satellite interference.
To this end, a multi-satellite beam hopping design approach is developed in \cite{multi_sat2} by intra-satellite beam hopping and additional inter-satellite interference coordinating, where serving relationships between satellites and beam cells are determined by a load balance method.
In \cite{multi_sat3}, an enhanced heuristic method for beam hopping is developed in dual satellite cooperative scenarios, which contributes to improving spectral efficiency.
\textbf{Although current beam hopping research can obtain a high service satisfaction, their beam pattern design mainly depends on instantaneous transmission demand without considering potential inter-satellite handover and interference to terrestrial networks.
}

\subsubsection{Satellite-terrestrial spectrum sharing}
Due to limited spectrum resource, LEO satellite network may be deficient to offer sufficient capacity to hop-spot areas via its owner spectrum.
Since terrestrial base stations sometimes operate in low-load states, spectrum sharing between LEO satellite and terrestrial networks is an attractive solution to tackle the spectrum shortage problem.
A protection area based method is adopted in \cite{space_spectrum_sharing2} to achieve spatial isolation between satellites and terrestrial networks.
Moreover, authors in \cite{space_spectrum_sharing1} evaluate satellite-terrestrial interference and then calculate the radius of protected area for the coexistence of satellite and terrestrial networks.
Unfortunately, interference avoidance methods in \cite{space_spectrum_sharing1,space_spectrum_sharing2} cause coverage holes and disrupt communication services in protected area.
For seamless coverage, Du \emph{et al}. in \cite{frequency_spectrum_sharing2} adopt a flexible spectrum sharing strategy based on a second-priced auction mechanism between a satellite and several base stations, achieving high data rate.
In \cite{noma_spectrum_sharing1}, a spectrum sharing method is designed by jointly allocating spectrum and power resource for satellite-terrestrial networks.
\textbf{Although the above works provide feasible solutions for satellite-terrestrial spectrum sharing and improve network capacity by satellite-terrestrial interference mitigation, they mainly focus on scenarios with a single satellite or a single beam, whereas base stations may receive multiple interference signals from different beams or satellites in practice.
Moreover, with the increasing number of base stations and satellites, their proposed methods cannot be directly applied due to sizeable computational complexity.
}

\subsection{Motivation and contribution }

Overall, the long-term service satisfaction of cells is highly relevant to LEO satellite beam management approaches, which needs to consider inter-satellite handover, beam hopping design, and satellite-terrestrial spectrum sharing.
As mentioned,  due to the time-varying transmission demand of cells, current inter-satellite handover mechanisms are ineffective in tackling the challenges caused by imbalanced load among satellites accumulated over time.
To satisfy the differential transmission demands of cells, the above studies in beam hopping design greedily maximize the service satisfaction of cells in each epoch.
However, they do not take inter-satellite handover and interference to terrestrial networks into account.
In addition, most satellite-terrestrial spectrum sharing studies only focus on a simple scenario with a single satellite or a single beam, which can not be applied in dynamic multiple LEO satellite networks.
\textbf{To the best of our knowledge, long-term beam management approach, which jointly considers inter-satellite handover decision, beam hopping design, and satellite-terrestrial spectrum sharing, has not been studied.}

To fill this gap, this paper proposes an effective long-term beam management approach for scenarios with multiple LEO satellites, comprising inter-satellite handover frequency control, beam hopping design, and satellite-terrestrial spectrum sharing.
To handle the challenge caused by time-averaged objective and handover frequency constraints, we leverage Lyapunov framework to obtain beam management decisions by solving a sequence of single epoch problems.
In each epoch, we identify inter-satellite handover events by the proposed conditional handover triggering mechanism, which keeps load balance among satellites and maintains a low inter-satellite handover frequency.
Under the given serving satellites, we further develop low-complexity beam hopping design and satellite-terrestrial spectrum sharing algorithms to maximize service satisfaction.
The main contributions of this paper are summarized as follows.
\begin{itemize}
    \item
    \textbf{Practical network model with multiple LEO satellites and novel beam management problem formulation:}
    We study a practical network model with multiple LEO satellites, which feature dynamic topology, random traffic arrival, earth-fixed cells, satellite-terrestrial/inter-beam interference, and corresponding interference mitigation to achieve inter-beam and satellite-terrestrial spectrum sharing.
    Based on such a model, we further formulate a novel beam management problem to maximize the long-term service satisfaction of beam cells, where inter-satellite handover frequency, interference constraints, and the guaranteed transmission demand of terrestrial networks are considered.
     Considering the difficulty to deal with the time-averaged objective function and constraints directly, we construct virtual queues for beam cells and then adopt Lyapunov framework to transform the primal problem into a series of tractable single epoch problems, where queue stability constraints replace inter-satellite handover frequency constraints.
    Since the transformed problem has NP complexity, we further decompose it into three simplified subproblems, consisting of inter-satellite handover decision, beam hopping design, and satellite-terrestrial spectrum sharing.

    \item
    \textbf{Effective algorithm design for the above three subproblems:}
    First, a novel conditional handover triggering mechanism is proposed, which establishes the connections between inter-satellite handover events and network statuses, including time-frequency resource utilization, cell service satisfaction, and network topology.
    With such a mechanism, we develop an inter-satellite handover decision algorithm based on a multi-attribute handover decision scheme and swap matching, aiming to control long-term inter-satellite handover frequency and satellite load distributions.
    Subsequently, we build a conflict graph to characterize potential inter-beam interference under the given serving satellite of each beam cell.
    Then, a low-complexity beam hopping design algorithm is developed to maximize the transmission capacity of cells and satisfy interference constraints.
    After beam hopping design, we propose a satellite-terrestrial spectrum sharing algorithm by combining improved binary sparrow search and greedy search, which contributes to high service satisfaction and spectral efficiency.

    \item
    \textbf{Deep analysis of computational complexity and extensive simulation evaluations:}
    The complexities of the proposed three algorithms and complete beam management approach are analyzed in-depth, which indicates that our proposal reduces computational complexity from an exponential level to a square level.
    By conducting massive simulations, it is shown that the proposed beam management approach can satisfy maximum inter-satellite handover frequency constraint.
    Furthermore, our proposal reduces the average data queue length by over 84.37$\%$ compared with traditional inter-satellite handover mechanisms.
    In addition, the average data queue length is reduced by over 50$\%$ compared with the baselines of satellite-terrestrial spectrum sharing.
    Moreover, the maximum network capacity can reach 98.51$\%$ of theoretical upper-bound performance without interference, which verifies the superiority of our proposal.

\end{itemize}

The reminder of this paper is organized as follows.
Section~\ref{sec:system_model} presents the system model and spectrum sharing schemes for LEO satellite networks.
In Section~\ref{sec:problem formulation}, the concerned beam management problem is formulated, and the proposed beam management algorithms are presented in Section~\ref{sec:Algorithm Framework}.
Simulation results are given in Section~\ref{sec:sim}, and the conclusions are drawn in Section~\ref{sec:con}.

\section{SYSTEM MODEL}\label{sec:system_model}

\subsection{Network model}\label{sec:scenario}

 \begin{figure}[t]
    \centering
    \includegraphics[width=3.3in]{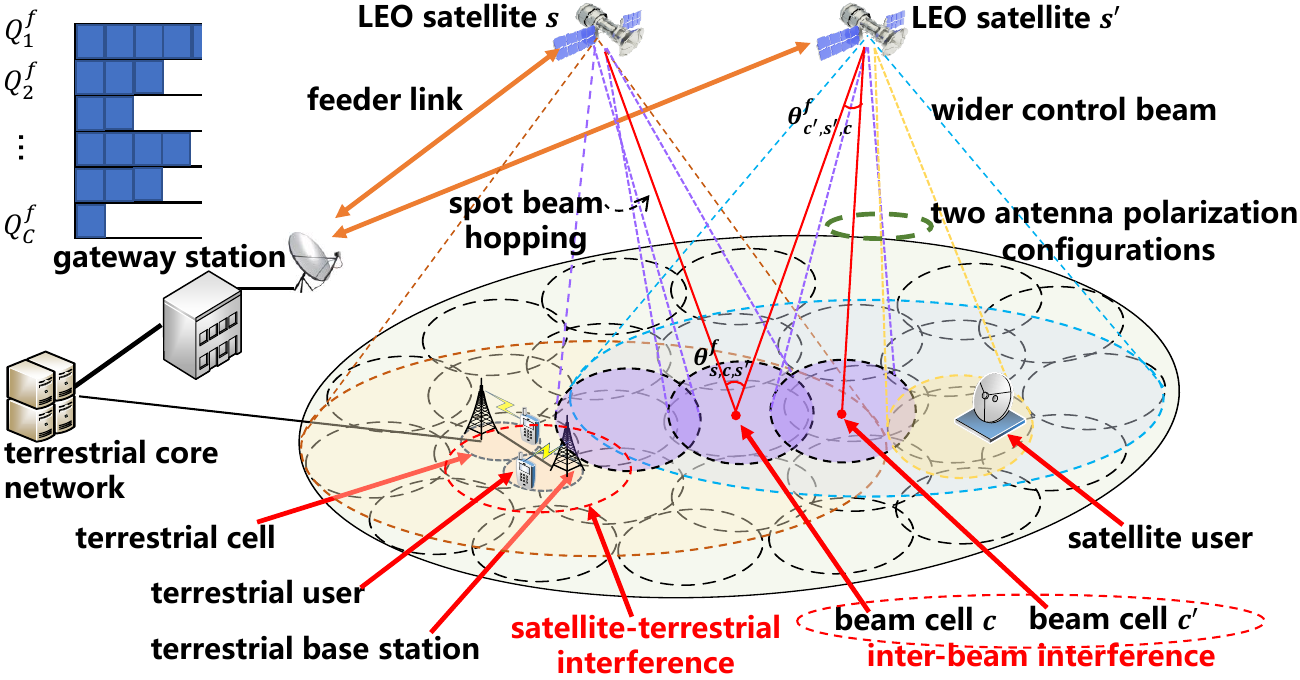}
    \caption{The concerned LEO satellite network scenario.}\label{fig1}
\end{figure}

As show in Fig.~\ref{fig1}, define $\mathcal{S}=\{1,...,s...,S\}$ as the set of LEO satellites, and each satellite can generate up to $B$ spot beams to provide high data rates.
Define $\mathcal{B}=\{1,..,b,...,B\}$ as the beam index set of a satellite.
Satellite antennas adopt two polarization configurations to alleviate inter-beam interference \cite{cross_polar_isolation}.
Moreover, all spot beam directions are controlled by a gateway station, which locates on the ground and connects satellites via feeder links.
Due to the long propagation delay, the gateway station formulates beam management decisions in advance, which consists of the serving satellites of beam cells, spot beam hopping design, and the bandwidth of beams.
Meanwhile, several wider control beams are generated by LEO satellites to broadcast beam management decision to satellite users.
Control beams and spot beams operate on orthogonal frequency bands, and hence interference between them can be ignored.
Denote the set of earth-fixed beam cells and terrestrial cells as $\mathcal{C}=\{1,2,...,C\}$ and $\mathcal{G}=\{1,2,...,G\}$, respectively.
Denote the bandwidth of satellite and terrestrial networks as $W_1$ and $W_2$, respectively.
Terrestrial cells adopt soft frequency reuse schemes to serve handheld terminals, and satellite users support carrier aggregation between the spectrums of LEO satellite networks and terrestrial networks.
Gateway station can exchange information with the terrestrial core network through optical fiber links, including cell loads and locations as well as beam hopping design.

Define $f=1,2,3,...$ as the index of beam scheduling epoch.
Each epoch is further divided into $T$ identical slots with duration $	T_{slot}$~\cite{multi_sat2}, and the index set of time slot is denoted as $\mathcal{T}=\{1,2,...,T\}$.
Network topology between satellites and cells is regarded as static in each epoch due to relatively short duration.
Define $x_{s,c}^f\in\{0,1\}$ to indicate the serving relationship between satellite $s$ and beam cell $c$ in epoch $f$.
$x_{s,c}^f=1$ represents that satellite $s$ serves cell $c$ in epoch $f$, and $x_{s,c}^f=0$ otherwise.
Define $y_{s,c,b}^{f,t}\in\{0,1\}$ to indicate the serving relationship between cell $c$ and the $b$-th beam of satellite $s$ in the $t$-th slot of epoch $f$.
$y_{s,c,b}^{f,t}=1$ if beam cell $c$ is served by the $b$-th beam of satellite $s$ in the $t$-th slot of epoch $f$, and $y_{s,c,b}^{f,t}=0$ otherwise.

The gateway station stores beam cell data into data queues $\mathcal{Q}_c$.
Considering that users are unevenly distributed among cells, arrival data sizes of cells are differential.
In epoch $f$, define $\alpha_{c}^f$ as the sum size of newly arrived data of all satellite users in cell $c$, and then the update process of data queue $\mathcal{Q}_c$ length can be expressed as
\begin{equation}
	Q_{c}^{f+1}=\max(Q_{c}^f-D_{c}^f,0)+ \alpha_{c}^f,
	\label{eq:queue update}
\end{equation}
where $D_{c}^f$ is the received data size of cell $c$ in epoch $f$, depending on transmission power $P_{s,c}^f$, transmission and receiving gain of antennas, bandwidth $W_{c}^f$, channel gain $h_{s,c}^f$ between the center point of cell $c$ and satellite $s$, and inter-beam interference.
Furthermore, $D_{c}^f$  can be calculated as

\begin{equation}
	D_{c}^f= \sum_{s=1}^S\sum_{b=1}^B\sum_{t=1}^T W_{c}^fT_{slot} \log (1+\frac{y_{s,c,b}^{f,t}P_{s,c}^fG_{p}h_{s,c}^f }{I_{s,c,b}^{f,t}+kT_cW_{c}^f}),
	\label{eq:rate}
\end{equation}
where  $G_{p}=G_{p}^{1}G_{p}^{2}$, $G_{p}^{1}$ and $G_{p}^{2}$ are the peak gain of transmission and receiving antennas, respectively.
$k$ is Boltzman constant, $T_c$ represents receiver temperature of cell $c$,
$I_{s,c,b}^{f,t}$ is the strength of interference the cell $c$ suffered in the $t$-th slot of epoch $f$, which is given by~\cite{multi_sat3}
\begin{equation}
	I_{s,c,b}^{f,t}=\sum_{c^\prime \in \mathcal{C}\setminus{\{c\}}} \sum_{s^\prime \in \mathcal{S}} \sum_{b^\prime \in \mathcal{B}_b} y_{s^\prime,c^\prime,b^\prime}^{f,t} P_{s,c^\prime}^fG_{p}G_{\theta_{c^\prime,s^\prime,c}^f}^{1}G_{\theta_{s^\prime,c,s}^f}^{2}h_{s^\prime,c}^f,
	\label{eq:cell_interference}
\end{equation}
where $\mathcal{B}_b$ is the set of beams with the same spectrum and polarization configuration as beam $b$.
$G_{\theta_{c^\prime,s^\prime,c}^f}^{1}$ and $G_{\theta_{s^\prime,c,s}^f}^{2}$ are the gain attenuations of transmission and receiving antennas on the direction of off-axis angles $\theta_{c^\prime,s^\prime,c}^f$ and $\theta_{s^\prime,c,s}^f$, respectively.
Particularly, $\theta_{c^\prime,s^\prime,c}^f$ can be calculated as
\begin{equation}
	\theta_{c^\prime,s^\prime,c}^f = \arccos[((d_{s^\prime,c}^f)^2+(d_{s^\prime,c^\prime}^f)^2-(d_{c,c^\prime}^f)^2)/(2d_{s^\prime,c}^fd_{s^\prime,c^\prime}^f)],
	\label{eq:theta_c_s_c}
\end{equation}
where $d_{s^\prime,c}^f$, $d_{s^\prime,c^\prime}^f$ and $d_{c,c^\prime}^f$ represent the distance between satellite $s^\prime$ and the center of cell $c$, satellite $s^\prime$ and the center of cell $c^\prime$, the centers of cells $c$ and $c^\prime$ in epoch $f$, respectively.
Similarly, $\theta_{s^\prime,c,s}^f$ can be calculated as
\begin{equation}
	\theta_{s^\prime,c,s}^f = \arccos[((d_{s,c}^f)^2+(d_{s^\prime,c}^f)^2-(d_{s,s^\prime}^f)^2)/(2d_{s,c}^fd_{s^\prime,c}^f)],
	\label{eq:theta_s_c_s}
\end{equation}
where $d_{s,s^\prime}^f$ represents the distance between satellite $s$ and satellite $s^\prime$ in epoch $f$.

Define $l_{g}^f \in [0,1]$ as the transmission load of base station $g$ in epoch $f$, which means that base station $g$ should work at least $\lceil l_{g}^fT \rceil$ time slots to guarantee the service quality of terrestrial users.
Since the coverage radius can reach several ten to hundred kilometers, a beam occupying the spectrum of terrestrial networks will incur serious co-frequency interference to numerous terrestrial cells.
Define $z_{s,c}^{f,t} \in\{0,1\}$ to indicate whether satellite $s$ occupies the spectrum of terrestrial networks to serve cell $c$.
$z_{s,c}^{f,t}=1$ means that satellite $s$ occupies the spectrum of terrestrial cells to serve beam cell $c$ in the $t$-th slot of epoch $f$, and $z_{s,c}^{f,t}=0$ otherwise.
At the center of terrestrial cell $g$, the strength of the received interference signal from the beam cell $c$ in the $t$-th slot of epoch $f$ can be expressed as~\cite{noma_spectrum_sharing1}
\begin{equation}
	I_{g,c}^{f,t} = \sum_{s \in \mathcal{S}} z_{s,c}^{f,t} P_{s,c}^fG_{p}^{1}G_{\theta_{c,s,g}^f}^{1}G_{\theta_{s,g}^f}^{3}h_{s,g}^f,
	\label{eq:satellite_terrestrial}
\end{equation}
where $G_{\theta_{c,s,g}^f}^{1}$  is the gain attenuation of satellite antenna on the direction of off-axis angle $\theta_{c,s,g}^f$, which is calculated based on the location of satellite $s$, the centers of cells $c$ and $g$.
$G_{\theta_{s,g}^f}^{3}$ is the receiving gain on the direction of off-axis angle $\theta_{s,g}^f$, and $\theta_{s,g}^f$ depends on the location of satellite $s$, cell $c$ and terrestrial users.
$h_{s,g}^f$ is channel gain between the center of cell $g$ and satellite $s$ in epoch $f$.
\begin{figure}[t]
    \centering
    \includegraphics[width=3in]{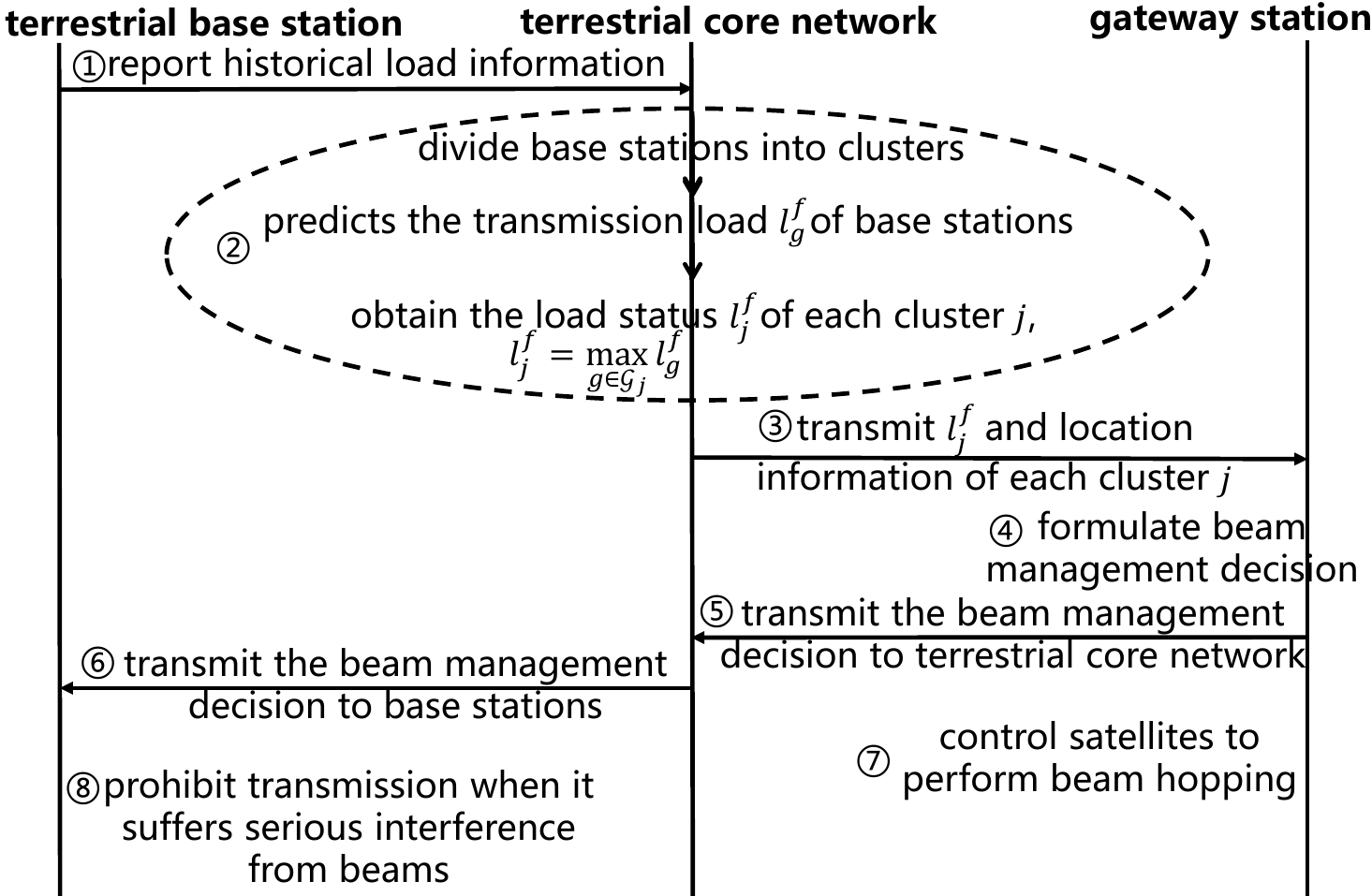}
    \caption{The process of the proposed satellite-terrestrial interference mitigation strategy to achieve spectrum sharing.}\label{fig2}
\end{figure}

\subsection{Interference mitigation strategies to achieve spectrum sharing in LEO satellite networks}\label{sec:Interference_model}

\subsubsection{Inter-beam interference mitigation strategy}
In \cite{one_sat2} and \cite{multi_sat3}, a simple interference mitigation strategy is adopted to design beam hopping patterns, where any two adjacent beam cells obtain service in different slots.
 Unfortunately, as the beam direction is far from the sub-satellite point, the beam footprint deforms, and its coverage area extends~\cite{background2}.
In this case, the serving beam brings severe interference to the nonadjacent cells of served cell, which makes the above interference mitigation strategy ineffective.
Remarkably, the gain of high directional antennas rapidly decreases when off-axis angles exceed the half-power bandwidth of antennas, and interference strength $I_{s,c,b}^{f,t}$ in equation (\ref{eq:rate}) depends on off-axis angles and interference beam number.
Hence, we propose a more appropriate inter-beam interference mitigation approach based on the off-axis angle model, whose detail is presented as follows.

Define $I_{s}^{th}$ as the acceptable maximum interference-to-noise ratio (INR) of inter-beam interference~\cite{multi_sat3}.
Denote the target signal-to-noise ratio (SNR) of cell $c$ in epoch $f$ as $SNR_{c}^f$, which can be expressed as
\begin{equation}
	SNR_{c}^f =10\log(P_{s_c,c}^fG_ph_{s_c,c}^f)-10\log(kT_cW_{c}^f),
	\label{eq:snr_express}
\end{equation}
where $s_c$ is the serving satellite of beam cell $c$.
According to equation (\ref{eq:cell_interference}) and (\ref{eq:snr_express}), the interference from cell $c^\prime$ can be calculated by $SNR_c^f$, $SNR_{c^\prime}^f$,  $h_{s_c,c}^f$, $h_{s_{c^\prime},c}^f$ and antenna gain attenuation value, where $s_{c^\prime}$ is the serving satellite of beam cell $c^\prime$.
Define $S_{c}^f$ as the maximal number of satellites that are visible to cell $c$ in epoch $f$.
Then, we can set a dynamic threshold $G_{c,c^\prime}^{th}$ for cell $c$ to restrict the maximum sum of antenna gain attenuation value caused by off-axis angles $\theta_{c^\prime,s_{c^\prime},c}$ and $\theta_{s_c,c,s_{c^\prime}}$, which can be expressed as
\begin{equation}
	G_{c,c^\prime}^{th}=I_{s}^{th}-10\log(S_{c}^fB)-SNR_{c^\prime}^f-10\log(\frac{h_{s_c,c}^f}{\delta_{c,c^\prime}h_{s_{c^\prime},c}^f}),
	\label{eq:inter_beam}
\end{equation}
where $\delta_{c,c^\prime}^f = 10^{(SNR_{c}^f-SNR_{c^\prime}^f)}$, $B$ is the maximum beam number generated by a LEO satellite.

Note that antenna gain attenuation value can be calculated based on the locations of beam cells and satellites.
Hence, we can easily judge the feasibility of two beams sharing the same spectrum, where two beams with the same polarization configuration cannot share a spectrum unless the sum of antenna gain attenuation value is less than $G_{c,c^\prime}^{th}$.

\subsubsection{Satellite-terrestrial interference mitigation strategy}

Define $I_{g}^{th}$ as the INR threshold to restrict the maximum interference between LEO satellite networks and terrestrial networks~\cite{space_spectrum_sharing1}.
Since off-axis angle $\theta_{s,g}^f$ has high randomness, we set $G_{\theta_{s,g}^f}^{3}=G_p^g$ to evaluate the worst satellite-terrestrial interference case.
In this case, INR constraint of satellite-terrestrial interference can be expressed as
\begin{equation}
	I_{g}^{f,t} =\sum_{c\in \mathcal{C}} \sum_{s \in \mathcal{S}} z_{s,c}^{f,t} P_{s,c}^fG_{p}^{1}G_{\theta_{c,s,g}^f}^{1}G_p^gh_{s,g}^f<I_{g}^{th}.
	\label{eq:inr_constraint}
\end{equation}

When beams share the spectrum of terrestrial networks, severe co-frequency interference may appear, which lowers the transmission rate and even interrupts communication links.
To improve spectral efficiency, we propose a satellite-terrestrial interference mitigation strategy to support satellite-terrestrial spectrum sharing from a spatial-temporal perspective.

As as shown in Fig.~\ref{fig2}, the process of our proposal can be concluded as eight steps.
First, terrestrial base stations report load information to terrestrial core network.
Then, the core network divides base stations into clusters to reduce the signaling overhead and the complexity of satellite-terrestrial spectrum sharing decision.
Define $\mathcal{J}=\{1,...,j,...J\}$ as the set of clusters, and the set of base stations in cluster $j$ is denoted as $\mathcal{G}_j$.
Next, the core network predicts the transmission load of each base station in epoch $f$ based on historical load information~\cite{traffic_load1,traffic_load2}, and then sets the cluster load $l_{j}^f=\max\limits_{g \in \mathcal{G}_j } l_{g}^f$.
Third, the core network provides $l_{j}^f$ and cluster locations to the gateway station.
Subsequently, the gateway station formulates beam management decisions for epoch $f$ and transmits them to the core network.
In this step, the formulated beam management decisions guarantee the transmission demand of terrestrial users, which means that the duration of strong interference a cluster suffered is less than $TT_{slot}\lfloor 1-l_{j}^f \rfloor$.
Then, the gateway station controls satellites performing beam management decisions.
Moreover, all base stations receive the interference slot information provided by the core network and then prohibit the transmission or receiving of data in these slots.

\section{Problem Formulation And Transformation}\label{sec:problem formulation}

\subsection{Problem constraints }\label{sec:Constraints}

Considering the limited capability of user equipments, a beam cell is served by a beam at most in each slot~\cite{multi_sat2}, which can be represented as
\begin{equation}
	\sum_{s=1}^S \sum_{b=1}^B y_{s,c,b}^{f,t} \leq 1, \;\; \forall c \in \mathcal C,  \forall f, t.
	\label{eq:beam_constraints1}
\end{equation}
Moreover, a beam cell is only served by a LEO satellite in each epoch, and the serving satellite must be within the beam cell's visible range, which is expressed as
\begin{equation}
	\sum_{s=1}^S \beta_{s,c}^f x_{s,c}^f =1, \;\; \forall c \in \mathcal C,  \forall f,
	\label{eq:visible_constraints}
\end{equation}
where $\beta_{s,c}^f \in \{0,1\}$.
$\beta_{s,c}^f=1$ if the elevation between satellite $s$ and the center of beam cell $c$ is larger than a preset threshold in epoch $f$, and $\beta_{s,c}^f=0$ otherwise.
In addition, a beam cannot simultaneously serve two beam cells, which is given by
\begin{equation}
	\sum_{c=1}^C y_{s,c,b}^{f,t} \leq 1, \;\; \forall f, s, t, b.
	\label{eq:beam_constraints2}
\end{equation}
Due to hardware capability limitation, a LEO satellite can generate up to $B$ beams in each slot, which can be represented as
\begin{equation}
	\sum_{c=1}^C \sum_{b=1}^{B} y_{s,c,b}^{f,t} \leq B, \;\; \forall s, t, f.
	\label{eq:beam_number_constraints2}
\end{equation}

Under the given $G_{c,c^\prime}^{th}$ in equation (\ref{eq:inter_beam}), we can obtain a tuple set $\mathcal{K}_f$, where each element  $\{s,s^\prime,c,c^\prime\} \in \mathcal{K}_f$ represents that severe interference occurs if satellites $s$ and $s^\prime$ simultaneously serve cells $c$ and $c^\prime$ via beams with the same spectrum and polarization configuration.
Then, to achieve inter-beam spectrum sharing, the inter-beam interference mitigation constraint can be expressed as
\begin{equation}
	y_{s,c,b}^{f,t} +  y_{s^\prime,c^\prime,b^\prime}^{f,t} = 0, \;\; \forall \{s,s^\prime,c,c^\prime\} \in \mathcal{K}_f, b,b^\prime \in \mathcal{B}^\prime, \forall t, f,
	\label{eq:beam_number_constraints3}
\end{equation}
where $\mathcal{B}^\prime$ is the set of beams with the same frequency band and polarization configuration.

Define $I_{s,c,j}^f$ to indicates the interference event between cluster $j$, LEO satellite $s$ and beam cell $c$ in the $t$-th slot of epoch $f$, which is given by
\begin{equation}\label{cluster_interfernence}
\begin{small}
	I_{s,c,j}^{f,t}=\left\{\begin{array}{ll}
		1, & \mathrm{if} \; (\sum_{g \in \mathcal{G}_g } \mathbb{I}(I_{g}^{f,t}>I_{g}^{th}))>0 \;\mathrm{and} \; x_{s,c}^f=1 \\
		0, & \mathrm{otherwise},
	\end{array}\right.
\end{small}
\end{equation}
where $\mathbb{I}(.)$ is indicator function.
Define $u_{j}^{f,t}\in \{0,1\}$ to indicate whether base stations in cluster $j$ suffer serious interference from LEO satellites.
$u_{j}^{f,t}=1$ if a base station in cluster $j$ suffers serious interference from LEO satellites in the $t$-th slot of epoch $f$, and $u_{j}^{f,t}=0$ otherwise.
Specifically, the relationship between $u_{j}^{f,t}$ and variable $I_{s,c,j}^f$ in (\ref{cluster_interfernence}) can be written as $u_{j}^{f,t}=\mathbb{I}(\sum_s\sum_c I_{s,c,j}^{f,t} )$.
Therefore, to support satellite-terrestrial spectrum sharing, the maximum duration constraint of satellite-terrestrial interference in epoch $f$ can be expressed as
\begin{equation}
   \sum_{t=1}^T u_{j}^{f,t} \leq T\lfloor 1-l_{j}^f \rfloor, \;\; \forall j, f.
	\label{eq:inr_constraint}
\end{equation}

Denote the tolerable maximum inter-satellite handover frequency threshold as $\overline{H}$, and then handover frequency constraint can be expressed as
\begin{equation}
   \lim_{f \to \infty} \frac{H_{c}^f}{f}  \leq \overline{H}, \;\; \forall c, f,
	\label{eq:handover_constraint}
\end{equation}
where $H_{c}^f$ is the accumulated number of inter-satellite handover of satellite users in beam cell $c$ from epochs 1 to $f$.
Moreover, we have $H_{c}^1=0$ and
\begin{equation}
   H_{c}^f=\sum_{f^\prime=2}^f \sum_{s=1}^S 1-x_{s,c}^{f^\prime-1}x_{s,c}^{f^\prime}, \;\; \forall f \geq 2.
	\label{eq:handover_cal}
\end{equation}

\subsection{Problem formulation }\label{sec:Problem Formulation}

Define $\boldsymbol{X}=(\{x_{s,c}^f\},\{y_{s,c,b}^{f,t}\},\{z_{s,c}^{f,t}\})$ as the beam management decisions in epoch $f$,
consequently, the concerned beam management optimization problem is formulated as follows.
\begin{align}
	&\boldsymbol{P_0}: \min\limits_{\{\boldsymbol{X}|\forall f\}} \lim_{F \to \infty} \frac{1}{F}\sum_{f=1}^F \sum_{c=1}^C  (D_{c}^f-Q_{c}^f)^2  \label{p0} \\
	s.t. \;& x_{s,c}^f,y_{s,c,b}^{f,t},z_{s,c}^{f,t} \in \{0,1\},  \;\; \forall c, f,t,s,b  \tag{\ref{p0}{a}} \label{p0a},\\
    & x_{s,c}^f= \mathbb{I}(\sum_{t=1}^T\sum_{b=1}^B y_{s,c,b}^{f,t}), \;\; \forall c, f,s  \tag{\ref{p0}{b}} \label{p0b},\\
    & z_{s,c}^{f,t} \leq \sum_{b=1}^B y_{s,c,b}^{f,t}, \;\; \forall c, f,t,s \tag{\ref{p0}{c}} \label{p0c},\\
	& (\ref{eq:beam_constraints1})-(\ref{eq:beam_number_constraints3}),(\ref{eq:handover_constraint}).   \notag
\end{align}
The concerned beam management problem intends to maximize the long-term service satisfaction of beam cells and control inter-satellite handover frequency.
Constraint (\ref{p0}{a}) means variables $x_{s,c}^f$, $y_{s,c,b}^{f,t}$, and $z_{s,c}^{f,t}$ are binary variables.
Constraints (\ref{p0}{b}) and (\ref{p0}{c}) represent limitions between variables $x_{s,c}^f$, $y_{s,c,b}^{f,t}$, and $z_{s,c}^{f,t}$.

\subsection{Problem transformation }\label{sec:Problem Transformation}
To cope with the complexity caused by time-averaged objective function and constraint (\ref{eq:handover_constraint}), we leverage Lyapunov drift rule to simplify the primal problem $\boldsymbol{P_0}$ into a more tractable expression \cite{lyap3,lyap2}.
Define a virtual queue $\mathcal{M}_{c}$ for beam cell $c$ to characterize the relationship between inter-satellite handover frequency and handover threshold $\overline{H}$ \cite{lyap3}, whose length in epoch $f$ is denoted as $M_{c,f}$.
We have
\begin{equation}
   M_{c,f}=\max (M_{c,f-1}+m_{c,f},0), \forall c,
	\label{eq:handover_queue}
\end{equation}
where $m_{c,f}=(1-\sum_{s=1}^S x_{s,c}^{f-1}x_{s,c}^f)-\overline{H}$  and $M_{c,1}=0$.
\begin{lemma}
The inter-satellite handover frequency constraints (\ref{eq:handover_constraint}) can be replaced by queue stabilization constraints.
\end{lemma}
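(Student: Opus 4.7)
My plan is to establish the lemma via the standard Lyapunov mean-rate stability argument, showing that enforcing $\lim_{f \to \infty} M_{c,f}/f = 0$ is a sufficient substitute for (\ref{eq:handover_constraint}). The key observation is that the $\max(\cdot,0)$ in the recursion (\ref{eq:handover_queue}) can be dropped to yield the one-sided bound $M_{c,f} \geq M_{c,f-1} + m_{c,f}$ for every $f \geq 2$ and every beam cell $c$. This inequality is the only algebraic ingredient needed.

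Next, I would telescope the above inequality from $f = 2$ to $f = F$, using the initial condition $M_{c,1} = 0$, to obtain
\begin{equation}
    M_{c,F} \;\geq\; \sum_{f=2}^{F} m_{c,f} \;=\; \sum_{f=2}^{F}\Bigl(1 - \sum_{s=1}^{S} x_{s,c}^{f-1} x_{s,c}^{f}\Bigr) \;-\; (F-1)\overline{H}. \notag
\end{equation}
By definition (\ref{eq:handover_cal}), the first sum on the right is exactly $H_c^F$, so rearranging gives
\begin{equation}
    \frac{H_c^F}{F} \;\leq\; \frac{M_{c,F}}{F} + \frac{F-1}{F}\,\overline{H}. \notag
\end{equation}

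Finally, I would pass to the limit $F \to \infty$. If every virtual queue $\mathcal{M}_c$ is kept mean-rate stable, i.e., $\lim_{F\to\infty} M_{c,F}/F = 0$, then the right-hand side converges to $\overline{H}$, which recovers the handover frequency bound (\ref{eq:handover_constraint}). Thus queue stabilization of $\{\mathcal{M}_c\}_{c\in\mathcal{C}}$ implies the original long-term handover constraint, and so the two can be exchanged in the problem formulation.

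There is no deep obstacle here; the only subtlety worth flagging in the write-up is that dropping the outer $\max$ preserves the inequality in the direction we need (lower bound on $M_{c,F}$), which in turn upper bounds the time-averaged handover count. Everything else is elementary telescoping and a routine limit, and the converse direction is not required because the lemma only asserts that queue stability is a valid replacement constraint for the Lyapunov-based decomposition.
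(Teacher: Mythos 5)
Your proposal is correct and follows essentially the same route as the paper: dropping the $\max(\cdot,0)$ to get $M_{c,f}-M_{c,f-1}\geq m_{c,f}$, telescoping with $M_{c,1}=0$ to relate $M_{c,F}$ to $H_c^F - (F-1)\overline{H}$, and then dividing by $F$ and letting $F\to\infty$ so that mean-rate stability of the virtual queues yields constraint (\ref{eq:handover_constraint}). If anything, your write-up is slightly more careful with the summation indices and with noting that only the one direction (stability implies the handover bound) is needed.
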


\begin{proof}
Based on queue update process in (\ref{eq:handover_queue}), we have
\begin{equation}
   M_{c,f}-M_{c,f-1} \geq m_{c,f}.
	\label{eq:handover_queue_1}
\end{equation}
Summing over $f\in \{2,3,...,F\}$, we can obtain
\begin{equation}
	\begin{aligned}
   M_{c,F} - M_{c,1} &\geq \sum_{f=1}^F m_{c,f}= H_{c}^F - F\overline{H}.
	\label{eq:handover_queue_2}
    \end{aligned}
\end{equation}
Dividing by $F$ and using $M_{c,1}=0$, it can prove that queue  $\boldsymbol{M}_{c}$ stabilization is equivalent to the concerned maximum inter-satellite handover frequency constraint (\ref{eq:handover_constraint}) being satisfied.
\end{proof}

Define $\boldsymbol{M}_f=(M_{1,f},...,M_{C,f})$ as the vector of all virtual queue $M_{c,f}$, and denote the Lyapunov function $L(\boldsymbol{M}_f)= \frac{1}{2} \sum_{c=1}^C M_{c,f}^2$.
Then, denote the drift in the Lyapunov function as $\Delta(f)$, and it is given by $\Delta(f) \triangleq L(\boldsymbol{M}_{f+1})-L(\boldsymbol{M}_f)$.
Define an auxiliary variable $\delta_f$, which is given by
\begin{equation}
   \delta_f = V[\sum_{c=1}^C  (D_{c}^f-Q_{c}^f)^2] +\sum_{c=1}^C M_{c,f}m_{c,f},
   \label{eq:new_object}
\end{equation}
where $V \geq 0$ is a constant.
\begin{lemma}
The objective function of problem $\boldsymbol{P}_0$ can be converted into $\delta_f$, and the desired algorithm minimizing $ \delta_f$ can guarantee the queue stability of $\mathcal{M}_{c}$.
\end{lemma}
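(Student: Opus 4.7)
The plan is to invoke the standard drift-plus-penalty technique from Lyapunov optimization. First, I would derive an upper bound on the one-step Lyapunov drift $\Delta(f) = L(\boldsymbol{M}_{f+1}) - L(\boldsymbol{M}_f)$. Dropping the $\max$ operator in the update rule (\ref{eq:handover_queue}) gives $M_{c,f+1} \leq M_{c,f} + m_{c,f+1}$; squaring both sides, summing over $c$, and rearranging yields
\begin{equation}
\Delta(f) \leq \sum_{c=1}^{C} M_{c,f}\, m_{c,f+1} + \frac{1}{2}\sum_{c=1}^{C} m_{c,f+1}^2.
\end{equation}
Since $m_{c,f} \in [-\overline{H},\, 1-\overline{H}]$ by its definition, the last term is bounded above by a constant $B_0 = C(1+\overline{H})^2/2$ that does not depend on the control variables $\boldsymbol{X}$.

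Next, I would add the weighted penalty $V\sum_{c} (D_c^f - Q_c^f)^2$ to both sides of the drift bound. After aligning the indexing so that $m_{c,f}$ is interpreted as the handover action taken in epoch $f$ against the virtual queue state $M_{c,f}$ observed at its start, the inequality reads
\begin{equation}
\Delta(f) + V\sum_{c=1}^{C} (D_c^f - Q_c^f)^2 \leq \delta_f + B_0.
\end{equation}
Because $B_0$ is a constant independent of $\boldsymbol{X}$, the per-epoch policy that minimizes $\delta_f$ also minimizes this drift-plus-penalty upper bound. This establishes the first half of the claim: the original time-averaged objective of $\boldsymbol{P}_0$ is converted into the per-epoch quantity $\delta_f$, with the standard $O(1/V)$ optimality gap absorbed into the tunable weight $V$.

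For queue stability, I would telescope the above inequality over $f = 1,\ldots,F$. Since $(D_c^f - Q_c^f)^2$ can be upper-bounded by the cost achievable by any feasible stationary randomized policy that satisfies (\ref{eq:handover_constraint}), dividing by $F$, using $L(\boldsymbol{M}_1)=0$, and letting $F \to \infty$ gives
\begin{equation}
\limsup_{F\to \infty} \frac{1}{F}\sum_{f=1}^{F}\sum_{c=1}^{C} \mathbb{E}[M_{c,f}] < \infty,
\end{equation}
i.e., the mean rate stability of $\mathcal{M}_c$. By Lemma 1, this is equivalent to the handover-frequency constraint (\ref{eq:handover_constraint}) being satisfied.

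The main obstacle will be carefully reconciling the epoch indexing between $m_{c,f}$, $M_{c,f}$, and $\delta_f$ as written in (\ref{eq:new_object}), because $m_{c,f}$ depends on the current decision $x_{s,c}^{f}$ while the drift is naturally written in terms of $M_{c,f+1}$ and $m_{c,f+1}$; once this bookkeeping is settled and the uniform bound on $m_{c,f}^2$ is installed, the remainder is essentially the textbook drift-plus-penalty argument of \cite{lyap3,lyap2}.
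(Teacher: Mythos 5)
Your proposal is correct and follows essentially the same drift-plus-penalty route as the paper: drop the max in the virtual-queue update to bound the Lyapunov drift, absorb the quadratic term into a constant $B_0$, add the penalty $V\sum_{c}(D_c^f-Q_c^f)^2$ so that minimizing $\delta_f$ minimizes the resulting upper bound, and then telescope to obtain $E[M_{c,F}]=O(\sqrt{F})$, hence $\lim_{F\to\infty}E[M_{c,F}]/F=0$, which together with Lemma~1 recovers constraint (\ref{eq:handover_constraint}). The only differences are minor: you justify the telescoping step by comparison with a feasible stationary randomized policy (the paper instead bounds $m_{c,f}$ and $M_{c,f}$ directly), and your displayed stability condition is really strong stability, which would need a slackness assumption -- but the weaker mean-rate-stability conclusion you actually invoke is what the argument delivers and is all that is needed.
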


\begin{proof}
To solve the minimization problem $\boldsymbol{P}_0$, we apply the drift-plus-penalty technology and then obtain the equivalent objective function as follows \cite{lyap2}
\begin{equation}
   \min \Delta(f) + V E[ \sum_{c=1}^C  (D_{c}^f-Q_{c}^f)^2 |\boldsymbol{M}_f ].
   \label{eq:new_object2}
\end{equation}

Under any beam management decision $\boldsymbol{X}$, we can obtain the upper bound of (\ref{eq:new_object2}) as follows \cite{lyap3}
\begin{equation}
   \begin{aligned}
   \Delta(f) +& V E[\sum_{c=1}^C  (D_{c}^f-Q_{c}^f)^2 |\boldsymbol{M}_f ] \leq B_0 \\
   &+ E[ \sum_{c=1}^C M_{c,f}m_{c,f}+ V (D_{c}^f-Q_{c}^f)^2 |\boldsymbol{M}_f ],
   \label{ie:Lyapunov2}
   \end{aligned}
\end{equation}
where $B_0$ is a constant and $  B_0 \geq \frac{1}{2} \sum_{c=1}^C  E[ m_{c,f}^2 | L(\boldsymbol{M}_f)].$

In this case, the minimization of (\ref{eq:new_object2}) can be achieved by minimizing the right-hand-side of inequality (\ref{ie:Lyapunov2})\cite{lyap3}, \cite{lyap2}, which means that problem $\boldsymbol{P}_0$ can be solved by greedily minimizing the following objective function in each epoch:
\begin{equation}
    E[ \sum_{c=1}^C M_{c,f}m_{c,f}+ V (D_{c}^f-Q_{c}^f)^2 |\boldsymbol{M}_f ]= E[ \delta_f |\boldsymbol{M}_f ].
\end{equation}
Hence, the conversion of the objective function is proven.
Next, we prove the stability of queue $\mathcal{M}_c$.
Since $m_{c,f} \leq 1-\overline H$ and $M_{c,f}\leq 2(1-\overline H)$, it can be inferred that
\begin{equation}
  \frac{1}{2} \sum_{c=1}^C M_{c,f}^2 -M_{c,f-1}^2 \leq B_0 + 2C(1-\overline H)^2.
   \label{ie:Lyapunov3}
\end{equation}
Summing over $f\in \{2,3,...,F\}$, we have $E(M_{c,F}^2)\leq 2F(B_0 +2C(1-\overline H)^2)$.
Due to $E(M_{c,F})^2\leq E(M_{c,F}^2)$, we have
\begin{equation}
  E(M_{c,F}) \leq \sqrt{2F(B_0 + 2C(1-\overline H)^2)}.
   \label{ie:Lyapunov4}
\end{equation}
Finally, it can be obtained that $\lim_{F \to \infty} \frac{M_{c,F}}{F}=0$.
\end{proof}

Based on the above analysis, the primal problem $\boldsymbol{P}_0$ can be equivalently transformed into a series of per-epoch problems $\boldsymbol{P}_1$ as follows
\begin{align}
	\boldsymbol{P_1}: & \min\limits_{\{\{x_{s,c}^f\},\{y_{s,c,b}^{f,t}\},\{z_{s,c}^{f,t}\} | \boldsymbol{M}_f,f \}} \delta_f  \label{p1}\\
	s.t. \;&(\ref{eq:beam_constraints1})-(\ref{eq:beam_number_constraints3}), (\ref{p0}{a})-(\ref{p0}{c}).  \notag
\end{align}

\section{Problem Decomposition and Beam Management Algorithm Design}\label{sec:Algorithm Framework}

\subsection{Problem decomposition}\label{sec:problem_decomposition}

When slot number in each epoch is 1, problem $\boldsymbol{P_1}$ degenerates into a multiple-knapsack problem, which is NP-hard~\cite{NP-hard}.
In addition, due to enormous amounts of variables and constraint terms, current iteration based methods and solvers inefficiently obtain a high-quality solution within a reasonable time overhead.
For these reasons, this part further decouples problem $\boldsymbol{P_1}$ into three tractable sub-problems.
The problem decomposition framework is shown in Fig.~\ref{fig4}, and corresponding reasons are concluded as follows.

Inter-satellite handover control is a core aspect of beam management approach, which requires the gateway station preassigning serving satellite beams to cells and maintaining a relatively low handover frequency.
If the developed handover algorithm is executed in each epoch or iteratively invoked, it will significantly increase the inter-satellite handover frequency~\cite{min_distance}.
Hence, we first decompose an inter-satellite handover decision problem from problem $\boldsymbol{P}_1$, and then design a conditional handover triggering mechanism.
The beam hopping design is independent of satellite-terrestrial spectrum sharing decisions in our scenario.
Moreover, satellite-terrestrial spectrum sharing cooperation may be canceled when LEO satellite networks work on a low load state or terrestrial networks overload.
Hence, under the given serving satellites of beam cells, problem $\boldsymbol{P}_1$ are further divided into beam hopping design problem and satellite-terrestrial spectrum sharing problem.

Considering the requirement of practical applications, three decomposed problems are only solved once in each epoch.
In addition, the required decision space sizes of decomposed subproblems are far less than problem $\boldsymbol{P_1}$, which contributes to obtaining desired beam management decisions with low complexity.
\begin{figure}[!t]
	\center
	\includegraphics[width=3.1in]{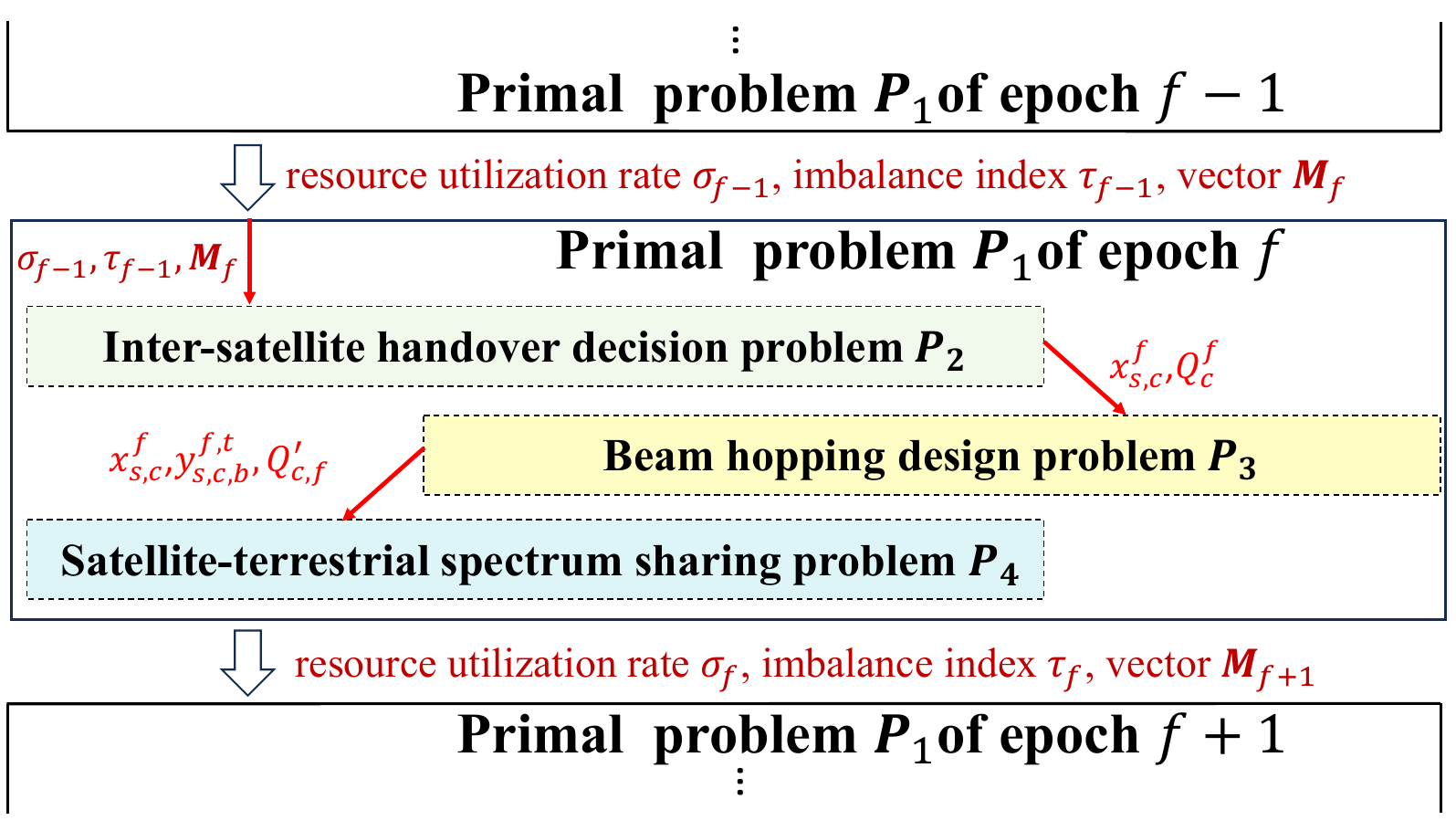}
	\caption{Beam management problem decomposition framework.}\label{fig4}
\end{figure}
\subsection{Inter-satellite handover decision problem and algorithm design}\label{sec:subproblem 1}

Suppose several hop-spot areas are served by a satellite and other low-demand beam cells are allocated to another satellite.
In this case, a satellite may overload while another satellite is leisure, which degrades network capacity.
Thereby, load balance strategy plays a crucial role in LEO satellite networks to improve capacity and fairness among beam cells \cite{multi_sat2}.
Particularly, we need to pre-assign serving satellites to beam cells before allocating time-frequency transmission resource, and the objective function value of problem $\boldsymbol{P_1}$ relies on beam hopping design and satellite-terrestrial spectrum sharing decisions.
Hence, it is essential to adjust the objective function of inter-satellite handover decision problem based on the primal objective function of the problem $\boldsymbol{P_1}$.
To this end, an inter-satellite handover decision problem based on problem $\boldsymbol{P_1}$ is formulated as follows
\begin{align}
	\boldsymbol{P_2}: & \min\limits_{\{\{x_{s,c}^f\} | \boldsymbol{M}_f,f \}} \delta_f^\prime  \label{p2}\\
	s.t. \;&(\ref{eq:visible_constraints}), (\ref{p0}{a}).  \notag
\end{align}
where
\begin{equation}
   \delta_f^\prime= \sum_{s=1}^S (\frac{\sum_{c=1}^C (x_{s,c}^f-\frac{1}{2})Q_{c}^f}{\sum_{c=1}^C Q_{c}^f})^2 + \sum_{c=1}^C M_{c,f}m_{c,f},
   \label{eq:delta_f-prime}
\end{equation}
and the first term of the right-hand-side of equation (\ref{eq:delta_f-prime}) represents the normalized weight value of satellite loads.

Actually, LEO satellite load fluctuates over time due to random traffic arrival.
If load balance strategy is performed in each epoch, inter-satellite handover frequency inevitably increases without apparent benefits.
To overcome this issue, we propose a conditional handover triggering mechanism, where inter-satellite handover procedure is only executed if one of conditions holds.
Note that inter-satellite handover procedure must be implemented when serving satellite leaves the visible area of a served cell.
In this case, the gateway station requires reallocating a satellite to this beam cell.
In addition, due to the movement of satellites, beam hopping design and satellite-terrestrial spectrum sharing algorithms may output low-quality solutions as interference situations change.
The low-quality solution represents that LEO satellite networks simultaneously obtain a low resource utilization rate and a load imbalance index.
Specifically, the resource utilization rate is the average ratio of the occupied time-frequency-space resource and all available resources of all beams, and the load imbalance index is the ratio of maximum load and minimum load among satellites.
Therefore, the gateway station is required to monitor load statuses and resource utilization rates of LEO satellite networks.
If resource utilization rate and imbalance index are simultaneously lower than preset thresholds $\sigma_0$ and $\tau_0$, gateway station will automatically reallocate serving satellites to beam cells.

Based the proposed conditional handover triggering mechanism, we further develop a proactive algorithm to solve problem $\boldsymbol{P_2}$, which is shown in Algorithm~\ref{alg:handover_Algorith}.
First, beam cells without serving satellite will obtain a new serving satellite.
and the initial serving satellites of them are obtained by a widely used multi-attribute handover decision scheme based on entropy in~\cite{multi-attribute}, where satellite loads, remaining service time, and elevation angle are considered as attributes (line 2-14).
Unfortunately, the above stage tends to load imbalance among satellites since it will distribute cells to satellites with high loads.
Therefore, we further control load distributions and inter-satellite handover frequency based on matching theory (line 15-36).
To enhance the exploration capacity of the proposal, we randomly adjust serving satellites of several cells (line 16).
Moreover, the expected value of the objective function $\delta_f^\prime $ monotonically decreases by swapping matching operations, which ensures convergence.

Define $\mathcal{C}_s$ as the set of beam cells served by satellite $s$, and $\mathcal{C}_s=\{c | x_{s,c}^f=1\}$.
Thereby, we have $\mathcal{C}_{s_1} \cap \mathcal{C}_{s_2}= \varnothing,  \forall s_1, s_2 \in \mathcal{S}$ and $\cup_{s=1,...,S} \mathcal{C}_{s} = \mathcal{C}$.
Define $\boldsymbol{\mathcal{C}_s}=(\mathcal{C}_1, \mathcal{C}_2,...\mathcal{C}_S)$, and then denote the objective function value $\delta_f^\prime $ under decision $\boldsymbol{\mathcal{C}_s}$ as $\delta_f^\prime(\boldsymbol{\mathcal{C}_s})$.
For any two beam cells $c_1 \in \mathcal{C}_{s_1}$ and $c_2 \in \mathcal{C}_{s_2}$, there are two kinds of swap operations for them.
The first kind of swap operation converts $\mathcal{C}_{s_1}$ and $ \mathcal{C}_{s_2}$ into $\mathcal{C}_{s_1}^\prime$ and $ \mathcal{C}_{s_2}^\prime $ by removing and adding beam cell operations (line 19-26),  $\mathcal{C}_{s_1}^\prime=\mathcal{C}_{s_1} \setminus \{ c_1\} \cup \{c_2\}$ and $\mathcal{C}_{s_2}^\prime=\mathcal{C}_{s_2} \setminus \{ c_2\} \cup  \{c_1\}$.
The second kind of swap operation converts $\mathcal{C}_{s_1}$ and $ \mathcal{C}_{s_2}$ into $\mathcal{C}_{s_1}^\prime$ and $ \mathcal{C}_{s_2}^\prime $ by removing a cell from $\mathcal{C}_{s_1}^\prime$ and then adding it into $\mathcal{C}_{s_2}^\prime$ (line 27-34), i.e., $\mathcal{C}_{s_1}^\prime=\mathcal{C}_{s_1} \setminus \{ c_1\} $ and $\mathcal{C}_{s_2}^\prime=\mathcal{C}_{s_2} \cup \{c_1\}$.
The resulted vector after once swap operation is denoted as $\boldsymbol{\mathcal{C}_s^\prime}=(\mathcal{C}_1,...,\mathcal{C}_{s_1}^\prime,\mathcal{C}_{s_2}^\prime,...,\mathcal{C}_S)$.
In swap matching stage, a swap is accepted only if $\mathcal{C}_{s_1}^\prime$, $\mathcal{C}_{s_2}^\prime$ satisfy constraints (\ref{eq:visible_constraints}) and $\delta_f^\prime(\boldsymbol{\mathcal{C}_s^\prime}) < \delta_f^\prime(\boldsymbol{\mathcal{C}_s})$~\cite{one_sat2}.
Finally, Algorithm~\ref{alg:handover_Algorith} outputs serving relationship $\{x_{s,c}^f\}$ when it cannot find a feasible swap operation or the number of iterations reaches threshold $N^\prime$.

\begin{algorithm} [t]
\begin{algorithmic}[1]
    \footnotesize
	\caption{Inter-satellite Handover Decision Algorithm}
	\label{alg:handover_Algorith}
	\State Input: $\boldsymbol{M}_f$, $Q_{c}^f$, $\overline{H}$, $x_{s,c}^{f-1}$, $x_{s,c}^f= 0$, $\sigma_f$, $\tau_f$.
	\For{$c=1:C$}
	\If{ the serving satellite can continue to serve cell $c$ }
	\State $x_{s,c}^f=x_{s,c}^{f-1}$.
    \State Update the load of the serving satellite of cell $c$.
    \EndIf
    \EndFor
    \For{$c=1:C$}
	\If{ $x_{s,c}^f==0$ }
	\State Allocate the serving satellite by using the multi-attribute decision handover scheme based on entropy in~\cite{multi-attribute}.
    \State $\sigma_f=0$, $\tau_f=0$.
    \State Update the load of the serving satellite of cell $c$.
    \EndIf
    \EndFor
    \If{$\sigma_f < \sigma_{0}$ and $\tau_f<\tau_0$}
    \State Randomly adjust serving satellites of several cells.
    \Repeat
    \State Construct $\boldsymbol{\mathcal{C}_s}$ and calculate  $\delta_f^\prime(\boldsymbol{\mathcal{C}_s})$ based on current $\{x_{s,c}^f\}$.
    \For{$c_1=1:C$}
    \For{$c_2=1:C$}
    \State Select a first kind swap and obtain $\delta_f^\prime(\boldsymbol{\mathcal{C}_s^\prime})$.
    \If {$\delta_f^\prime(\boldsymbol{\mathcal{C}_s^\prime}) < \delta_f^\prime(\boldsymbol{\mathcal{C}_s})$}
    \State Update $x_{s,c}^f$ and $\delta_f^\prime(\boldsymbol{\mathcal{C}_s})$.
    \EndIf
    \EndFor
    \EndFor
    \For{$c=1:C$}
    \For{$s=1:S$}
    \State Select a second kind swap and obtain $\delta_f^\prime(\boldsymbol{\mathcal{C}_s^\prime})$.
    \If {$\delta_f^\prime(\boldsymbol{\mathcal{C}_s^\prime})< \delta_f^\prime(\boldsymbol{\mathcal{C}_s})$}
    \State Update $x_{s,c}^f$ and $\delta_f^\prime(\boldsymbol{\mathcal{C}_s})$.
    \EndIf
    \EndFor
    \EndFor

    \Until{the number of iterations reaches $N^\prime$ or all $\delta_f^\prime(\boldsymbol{\mathcal{C}_s^\prime}) \leq \delta_f^\prime(\boldsymbol{\mathcal{C}_s})$}
    \EndIf
	\State Output:  $\{x_{s,c}^f\}$.
	\end{algorithmic}
\end{algorithm}

\subsection{Beam hopping design problem and algorithm design}\label{sec:subproblem 2}

Under the given serving relationship $x_{s,c}^f$, inter-satellite handover frequencies of beam cells are deterministic values.
Therefore, the objective function $\delta_f$ of problem $\boldsymbol{P}_1$ degrades into $\sum_{c=1}^C  (D_{c}^f-Q_{c}^f)^2$, whose solution dimensions are proportional to the number of available beams, beam cells, and time slots in each epoch.
Since such large solution dimensions significantly increase the complexity of obtaining a proper beam hopping design~\cite{one_sat2}, we solve problem $\boldsymbol{P}_1$ by greedily minimizing $\sum_{c=1}^C  (D_{c}^f-Q_{c}^f)^2$ in each slot in sequence.
Define $D_{c}^{f,t}$ as the transmission data size of cell $c$ in the $t$-th slot of epoch $f$ via the exclusive spectrum of LEO satellite networks, which can be calculated as
\begin{equation}\label{cell_data_slot}
	D_{c}^{f,t}=\sum_{s=1}^C \sum_{b=1}^B y_{s,c,b}^{f,t}W_1T_{slot}\log(1+SNR_{c}^f).
\end{equation}
Denote the remaining data queue length of beam cell $c$ of the $t$-th slot of epoch $f$ as $Q_{c}^{f,t}$, where $Q_{c}^{f,t+1}=\max(0,Q_{c}^{f,t}-D_{c}^{f,t})$ and $Q_{c}^{f,1}=Q_{c}^f$.
Then, the beam hopping design problem in slot $t$ can be expressed as
\begin{align}
	\boldsymbol{P_3}: & \min\limits_{\{\{y_{s,c,b}^{f,t}\} | x_{s,c}^f,Q_{c}^{f,t},f,t \}} \sum_{c=1}^C  (D_{c}^{f,t}-Q_{c}^{f,t})^2  \label{p3}\\
	s.t. \;&  (\ref{eq:beam_constraints1})-(\ref{eq:beam_number_constraints3}), (\ref{p0}{a}), (\ref{p0}{b}).  \notag
\end{align}

Since the inter-beam interference situation can be regarded as fixed in each epoch with a short duration, we can construct a weighted conflict graph for all slots in an epoch.
Define $\mathcal{V}_f$ as the vertex set of conflict graph in epoch $f$, and a vertex represents a feasible $y_{s,c,b}^{f,t}$ of a cell.
There are bound to $B$ vertices corresponding to a cell, representing that a cell can be served by any beam of the serving satellite.
Define  $\mathcal{E}_f$ as the edge set in conflict graph.
If an edge connects two vertices $v_1$ and $v_2$, it indicates that two beam hopping decisions $y_{s_1,c_1,b_1}^{f,t}$, $y_{s_2,c_2,b_2}^{f,t}$ corresponding to vertices $v_1$ and $v_2$ violate a constraint term.
Remarkably, the situations in which an edge appears can be summarized into three categories.
\begin{itemize}	
	\item Two vertices represent the same cell, which violates constraint (\ref{eq:beam_constraints1}).
	\item Two vertices represent that two cells are simultaneously served by a beam, which violates constraint (\ref{eq:beam_constraints2}).
    \item Severe inter-beam interference occurs, which violates constraint (\ref{eq:beam_number_constraints3}).	
\end{itemize}

Fig.~\ref{fig3} presents an example of a LEO satellite network scenario and corresponding conflict graph, where satellite 1 serves cells 1 and 2, and satellite 2 serves cells 3 and 4.
Beam 1 and 2 operate on different antenna polarization methods and the number of vertices in set $\mathcal{V}_f$ is 8.
An edge connects vertices 1 and 2 since they all represent cell 1.
Severe inter-beam interference occurs when cells 1 and 3 simultaneously obtain service from beams with the same polarization configuration.
Hence, vertex 1 directly connects with vertex 5, and vertex 2 connects with vertex 6. 

\begin{figure}[!t]
	\center
	\includegraphics[width=3in]{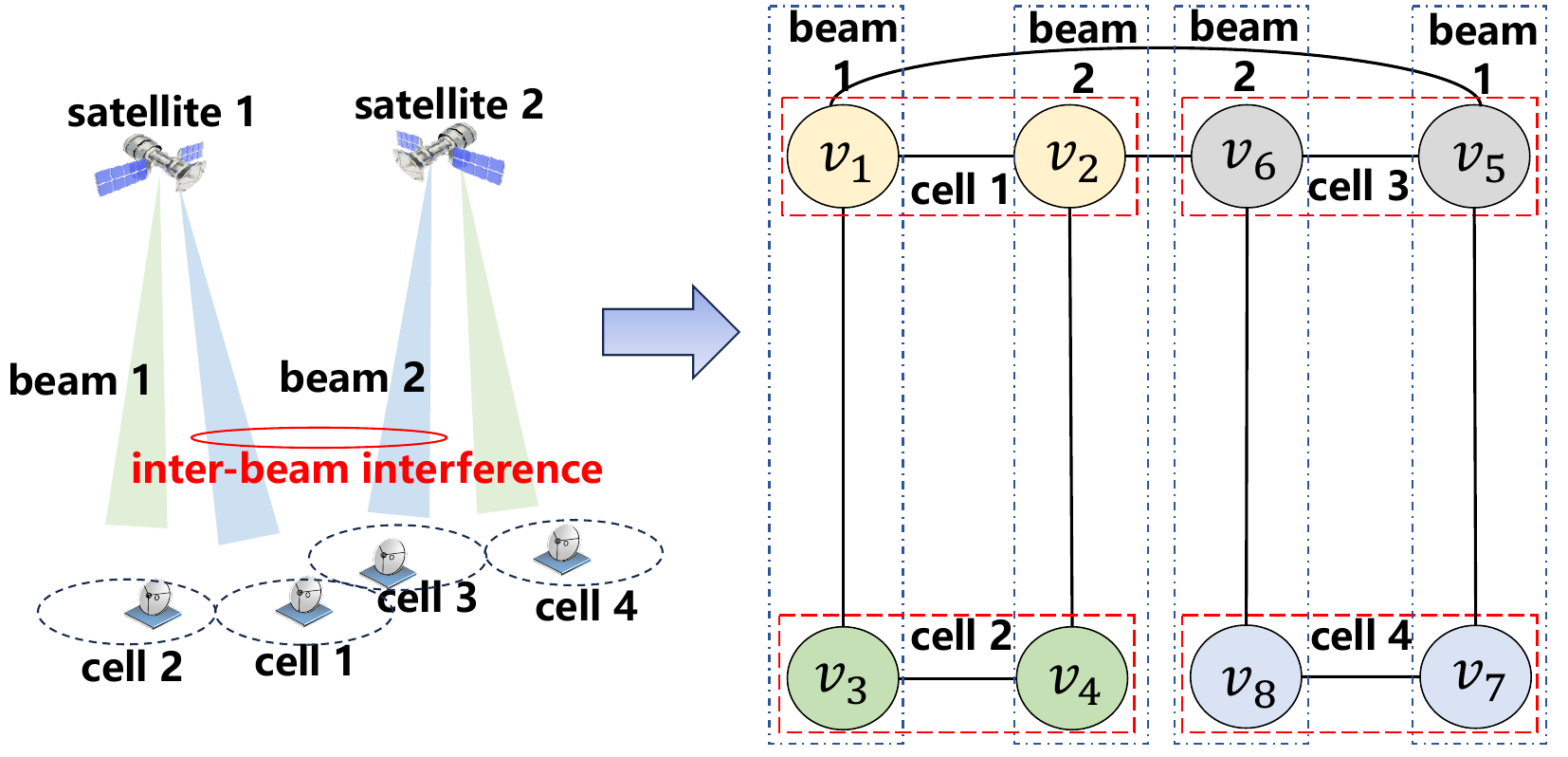}
	\caption{An example of a network scenario and corresponding conflict graph. }\label{fig3}
\end{figure}

The weight of vertex $v$ corresponding to beam cell $c$ is calculated by a subtraction operation between a constant and term $(D_{c}^{f,t}-Q_{c}^{f,t})^2$, which is given by
\begin{equation}\label{eq:cell_weight}
\begin{small}
	w_v= (W_1T_{slot}\log(1+SNR_{c}^f))^2+(Q_{c}^{f,t})^2-(D_{c}^{f,t}-Q_{c}^{f,t})^2.
\end{small}
\end{equation}
The weighted maximum independent set (WMIS) problem aims to search a subset of vertices with the maximum weight sum, where any two vertices in subset are disconnected~\cite{NP-hard2}.
Hence, vertices in WMIS satisfy all constraints of problem $\boldsymbol{P_3}$.
In addition, the maximum weight sum of WMIS corresponds to the minimum objective function value of problem $\boldsymbol{P_3}$.
Therefore, problem $\boldsymbol{P_3}$ can be converted into a WMIS problem based on the built conflict graph.

\begin{algorithm} [htbp]
\begin{algorithmic}[1]
    \footnotesize
	\caption{Beam Hopping Design Algorithm}
	\label{alg:msia_Algorithm}
	\State Input: $x_{s,c}^f$, $T$, $Q_{c}^f$, $\mathcal Result = \emptyset$.
    \State Construct a conflict graph $(\mathcal{V}_f, \mathcal{E}_f)$, $Q_{c}^{f,1}=Q_{c,f}^\prime=Q_{c}^f$.
    \For{t=1:T}
       \State Set available state $\boldsymbol{f}=\boldsymbol{0}_{1\times \mathcal{V}_f|}$ and $\mathcal{V}^\prime = \emptyset$.
       \State Set an inaccessible state $\boldsymbol{f}(k)=1$ for all vertices with empty queue.
       \State Calculate the weight $w_v$ and weight radio $\rho_v$ for all vertices.
       \State According to descending order of weight radio $\rho_v$, visit all vertices.
       \For{$k=1:|\mathcal{V}_f|$}
          \If{$ \boldsymbol{f}(k)==0$}
            \State Find the vertex $v_k$ corresponding to the sorted index $k$.
            \If{$\mathcal{V}^\prime == \emptyset$   or edge $ (v_k, v_k^\prime)  \not\in \mathcal{E}_f,\; \forall v_k^\prime \in \mathcal{V}^\prime $}
            \State $\mathcal Result \gets \mathcal Result \cup \{y_{s,c,b}^{f,t}\}$.
	        \State $\mathcal{V}^\prime\gets \mathcal{V}^\prime \cup \{v_k\}$,
            \State Set inaccessible states $\boldsymbol{f}(k)=1$ for all vertices in $\mathcal{V}_{v_k}$.
\EndIf
            \State Update $Q_{c}^{f,t}$ and set $Q_{c,f}^\prime=Q_{c}^{f,t}$.
            \EndIf
       \EndFor
    \EndFor
	\State Output:  $\mathcal Result$.
	\end{algorithmic}
\end{algorithm}

As shown in Algorithm~\ref{alg:msia_Algorithm}, we propose a low complexity greedy search algorithm to solve the transformed WMIS problem.
We first set inaccessible states for vertices with empty data queue (line 4-5), and then calculate weight ratios $\rho_v$ for rest accessible vertices, where $ \rho_v =w_v/(w_v+\sum_{v^\prime \in \mathcal{V}_v } w_{v^\prime})$, and $\mathcal{V}_v$ is the set of accessible vertices adjacent to vertex $v$ (line 6).
Next, accessible vertices are visited according to the descending order of weight ratio $\rho_v$ (line 7-18).
In this stage, when an accessible vertex is recorded, its adjacent vertices are set as inaccessible states (line 11-15).
Meanwhile, we calculate the $Q_{c}^{f,t}$ of cells obtaining service (line 16).
When all vertices are set to inaccessible states, we update the weight and accessibility of vertices and then repeat the above process for next slot.
Finally, Algorithm~\ref{alg:msia_Algorithm} outputs beam hopping design decisions $\{y_{s,c,b}^{f,t}\}$  and the rest data queue length $Q_{c,f}^\prime$ of cells.

\subsection{Satellite-terrestrial spectrum sharing problem and algorithm design}\label{sec:subproblem 3}

Define $D_{c,f}^\prime$ as the transmission data size of beam cell $c$ in epoch $f$ via the spectrum of terrestrial networks, and it can be calculated as
\begin{equation}\label{cell_data_slot}
\begin{small}
	D_{c,f}^\prime=\sum_{s=1}^S \sum_{t=1}^T z_{s,c}^{f,t}W_2T_{slot}\log(1+SNR_{c}^f).
\end{small}
\end{equation}

In order to achieve high spectral efficiency and minimize the objective function of problem $\boldsymbol{P}_1$, an optimization problem $\boldsymbol{P_4}$ based on the given serving satellite and beam hopping design is formulated as follows.
\begin{align}
 	\boldsymbol{P_4}:	& \min\limits_{ \{\{z_{s,c}^{f,t}\} |Q_{c,f}^\prime, x_{s,c}^f, y_{s,c,b}^{f,t},f \}}  \quad	  \sum_{c=1}^C (D_{c,f}^\prime-Q_{c,f}^\prime)^2 \label{spectrum_sharing}\\
	s.t. \; \; &(\ref{eq:inr_constraint}), (\ref{p0}{a}), (\ref{p0}{c}). \notag
\end{align}

Although problem $\boldsymbol{P}_4$ can be converted into an integer programming problem, it confronts an unacceptable computational time cost in directly applying the classical branch and bound method or solvers~\cite{Yuan}.
To address this issue, we develop a low complexity satellite-terrestrial spectrum sharing algorithm to solve problem $\boldsymbol{P}_4$, which combines with improved binary sparrow search algorithm in \cite{sparrow_search} and greedy search algorithm, whose detail is summarized in Algorithm~\ref{alg:spectrum_sharing_Algorith}.

Firstly, decision variables $z_{s,c}^{f,t}$ of beam cells with an empty queue are set to 0, and the dimension of a sparrow equals the number of rest variables $z_{s,c}^{f,t}$, which contributes to reducing the decision space size.
Since chaotic mapping has powerful capacities in increasing exploration capacity,
we adopt tent chaotic strategy to initialize sparrows with continuous values between 0 and 1.
Define $S_{i,j}$ as the value of the $j$-th continuous variable of the $i$-th sparrow, and corresponding binary value $S_{i,j}^\prime$ is obtained by inputting $S_{i,j}$ to a s-shape function, which is given by~\cite{sparrow_search2}
\begin{equation}\label{binary_operation}
	S_{i,j}^\prime=\left\{\begin{array}{ll}
		1, &if \; \frac{1}{1+e^{-2S_{i,j}}} > \mu, \\
		0, &otherwise,
	\end{array}\right.
\end{equation}
where $\mu$ is a random value between 0 and 1.
Moreover, the quality of sparrow $i$ is evaluated by its fitness value $F_i$, which can be calculated as
\begin{equation}\label{fitness}
\begin{small}
	F_i=\left\{\begin{array}{ll}
		\sum_{c=1}^C \Omega_c-(D_{c,f}^\prime-Q_{c,f}^\prime)^2, &if \; constraint \; (\ref{eq:inr_constraint}) \;holds, \\
		\sum_{c=1}^C \Omega_c -(Q_{c,f}^\prime)^2, &otherwise,
	\end{array}\right.
\end{small}
\end{equation}
where $\Omega_c = {Q_{c,f}^\prime}^2+(W_2T_{slot}\log(1+SNR_{c}^f)T)^2$.

Next, we sort sparrows according to the descent order of fitness value and then divide them into producers and scroungers.
Meanwhile, several sparrows are randomly selected as spectators.
The update processes of producers, scroungers, and spectators are consistent with the traditional sparrow search algorithm in \cite{sparrow_search}.
Moreover, our proposal adopts additional local search  (line 7-22) and adaptive crossover (line 23-30) steps to enhance the exploit capacity.
Local search is executed by performing mutation operations for the sparrow with maximum fitness value.
Adaptive crossover procedure is worked on the half of sparrows with worse fitness value.
After the improved sparrow search stage, we subsequently employ a greedy search stage to increase spectral efficiency (line 35-37).
In this stage, we set $z_{s,c}^{f,t} =1$ if constraint (\ref{eq:inr_constraint}) still holds after performing value mutation operation on $z_{s,c}^{f,t}$.

\begin{algorithm}[t]
	\begin{algorithmic}[1]
		\caption{Satellite-Terrestrial Spectrum Sharing Algorithm}
		\label{alg:spectrum_sharing_Algorith}
		\footnotesize
        \State Set $z_{s,c}^{f,t}=0$ for all cells with an empty queue $Q_{c,f}^\prime$.
		\State Input: $x_{s,c}^f$, $y_{s,c,b}^{f,t}$, the maximum number of iterations $N^{\prime \prime}$, the number of local search iterations $N^{\prime \prime \prime}$, population size $N_{pop}$, sparrow size $Z_s$, the number of producers $P_d$, the number of spectators $S_d$.
		\State Initialize sparrow population by using tent chaotic strategy.
        \State Obtain binary $z_{s,c}^{f,t}$ based on (\ref{binary_operation}) and record the global binary optimal solution $S_g$ and its fitness $f(S_g)$.
		\While {$i<N^{\prime \prime}$}

        \State Sort sparrows according to the descent order of fitness, find the current best individual $S_b$.

        \While {$j<N^{\prime \prime \prime}$}
        \State $S_b^\prime=S_b$.
        \For{$j=1:4$}
        \State Randomly select a elements $s_b$ in the best sparrow $S_b^\prime$.
        \State $s_b$ = 1- $s_b$.
        \EndFor
        \State Obtain the binary solution $z_{s,c}^{f,t}$ of $S_b^\prime$ based on (\ref{binary_operation}).
        \State Calculate fitness $f(S_b^\prime)$.
        \If {$f(S_b^\prime)>f(S_b)$}
        \State $S_b=S_b^\prime$.
        \If {$f(S_b^\prime)>f(S_g)$}
        \State Update global binary optimal solution $S_g$.
        \State $f(S_g)=f(S_b^\prime)$.
        \EndIf
        \EndIf
        \EndWhile

        \State Calculate $\upsilon = 0.55-\frac{0.1}{1+\exp(5-10*i/N^{\prime\prime})}$
        \For{$j=N_{pop}/2+1 :N_{pop}$}.
        \If{$rand(1) <\upsilon$}
        \State Generate a randomly integer $\zeta$ in $\{1,2,3\}$.
        \State Randomly select $\zeta$ elements in sparrow $j$.
        \State For each selected element $s_j(\zeta)$, we have  $s_j(\zeta)= 1- s_j(\zeta)$.
        \EndIf
        \EndFor
        \State Update all sparrows' location according to \cite{sparrow_search}.
        \State Obtain the binary solution of all sparrow based on (\ref{binary_operation}).
        \State If the fitness of new sparrows are better than $f(S_g)$, update $f(S_g)$ and global binary optimal solution $S_g$.
       \EndWhile
       \For{$i=1:Z_s$}
       \State Set the $i$-th $z_{s,c}^{f,t}$ as 1 if all constraint (\ref{eq:inr_constraint}) can be met.
       \EndFor
	    \State Output:  $\{z_{s,c}^{f,t}\}$.
	\end{algorithmic}
\end{algorithm}

\subsection{Complexity analysis}\label{sec:Complexity Alalysis}
\subsubsection{Traditional methods}
To simplify analysis, we ignore the influence of constraint terms in this part.
The variable number $V$ of problem $\boldsymbol{P_1}$, $\boldsymbol{P_2}$, $\boldsymbol{P_3}$ and $\boldsymbol{P_4}$ are $SC+SCBT+SCT$, $SC$, $SCBT$ and $SCT$, respectively.
Referring to \cite{mosek_analysis}, the complexity of traditional methods, i.e., solvers or branch and bound methods, exponentially increases with numbers of variable size, which can be expressed as $\mathcal{O}(2^V)$.

\subsubsection{Our proposal}
In inter-satellite handover decision algorithm (Algorithm~\ref{alg:handover_Algorith}), the maximum number of the first and the second swap operations are $C(C-1)$ and $CS$, respectively.
Due to $S<C$, the complexity of Algorithm~\ref{alg:handover_Algorith} is $\mathcal{O}(N^\prime C^2)$, where $N^\prime$ is the maximum iteration number.
The complexity of beam hopping design algorithm (Algorithm~\ref{alg:msia_Algorithm}) mainly relies on checking the accessible state of vertices and their adjacent vertices in all slots.
The maximum number of vertices and their adjacent vertices are $BC$ and $BC(BC-1)/2$, respectively.
Hence, the complexity of Algorithm~\ref{alg:msia_Algorithm} is $\mathcal{O}(TB^2C^2)$.
For satellite-terrestrial spectrum sharing algorithm (Algorithm~\ref{alg:spectrum_sharing_Algorith}), the maximum number of iterations and the population size of sparrows are  $N^{\prime \prime}$ and $N_{pop}$, respectively.
In each iteration, the numbers of the sparrow update and fitness calculation operations are less than $2.5N_{pop}$.
In addition, the number of adaptive crossover operations is $N^{\prime \prime \prime}$, and $N^{\prime \prime \prime}< N_{pop}$.
Moreover, the last greedy search step requires search all $z_{s,c}^{f,t}$ and corresponding number is $SB$.
Hence, the complexity of Algorithm~\ref{alg:spectrum_sharing_Algorith} is $\mathcal{O}(N^{\prime \prime}N_{pop}+SB)$.
Based on the above analysis, the complexity of the proposed beam management approach is $\mathcal{O}(N^\prime C^2+TB^2C^2 +N^{\prime \prime}N_{pop})$.

\section{Simulation Results and Analysis}\label{sec:sim}

\subsection{Simulation parameters and performance metrics}

To evaluate the performances of the proposed beam management approach, we conduct massive simulations in LEO satellite network scenarios with 1200 LEO satellites, 20 earth-fixed beam cells, and 200 terrestrial clusters.
Satellites are evenly distributed in 30 orbit planes, whose inclination and altitude are $53^{\circ}$ and 550 km, respectively.
Moreover, beam cells are generated in a rectangular area and arranged in four rows and five columns, where each cell is hexagonal, and the inter-cell distance is 34.6 km.
The first beam cell locates at $(30^{\circ}E,20^{\circ}N)$.
Terrestrial base station clusters are randomly generated in circles with radius 50 km, whose center points locate at the centers of beam cells.
Moreover, inter-cluster distance is more than 10 km.
Transmission load $l_{j}^f$ of base station clusters are randomly generated in $[0.4, 0.6]$.

LEO satellites equip with four beams and operate at 20 GHz.
The bandwidth of LEO satellites and terrestrial networks are 200 MHz and 80 MHz, respectively.
Moreover, two orthogonal antenna polarization configurations are considered.
Terrestrial users are handheld terminals with peak gain 0 dB, and the antenna models of LEO satellites and satellite users refer to \cite{38.811} and \cite{ue_Antenna}, respectively.
The channel model adopts the model in 3GPP 38.811 \cite{38.811}.
In addition, the INR threshold of inter-beam interference $I_{s}^{th}$ and satellite-terrestrial interference $I_{g}^{th}$ are preset to -5 dB and -10 dB, respectively.
Here, the minimum elevation angle between LEO satellites and satellite users is $35^{\circ}$.
To simplify simulations, all beam cells are consistently served by two satellites, and target SNR of all beam cells is set to 12 dB.
There are 200 time slots per epoch with a duration of 200 ms, and simulations last 66.67 minutes for evaluating long-term beam management performances.
The tolerable maximum inter-satellite handover frequency threshold $\overline{H}$ is set to 0.004.
We set resource utilization rate threshold $\sigma_0$ and imbalance index $\tau_0$ to 0.9 and 2, respectively.
Main parameters and the normalized average demand among beam cells are summarized in Table~\ref{tab:1} and \ref{tab:2}, respectively.

\begin{table}[htbp]
\centering
\caption{Simulation parameter setting}
\label{tab:1}
\footnotesize
\begin{tabular}{llllll} \toprule
\textbf{Parameter}                      & \textbf{Value} \\
\midrule
The number of satellite orbits            & 30  \\
The number of LEO satellites in an orbit      & 40  \\
The number beams per satellite            & 4   \\
The number of beam cells  & 20  \\
The number of base station clusters  & 200  \\
The range of transmission load  $l_{j,f}$     & [0.4, 0.6] \\
Antenna polarization configuration number              &2    \\
Orbit altitude                            & 550 km \\
Orbit inclination               & $53^{\circ}$  \\
The location of the first beam cell & $(30^{\circ}E,20^{\circ}N)$\\
$I_{s}^{th}$                          & -5 dB   \\
$I_{g}^{th}$                          & -10 dB  \\
$SNR_c^f$                               & 12 dB   \\
$\overline{H}$                          & 0.004 \\
The radius of a beam cell        & 34.6 km   \\
Bandwidth  of LEO satellite networks    & 200 MHz \\
Bandwidth of terrestrial networks     & 80 MHz  \\
Center operation frequency            & 20 GHz   \\
Minimum elevation angle             & $35^{\circ}$\\
The number of slots in a epoch & 200    \\
The epoch number           &20000   \\
Resource utilization rate threshold $\sigma_0$ &0.9 \\
Imbalance index $\tau_0$ &2\\
\bottomrule
\end{tabular}
\end{table}

\begin{table}[htbp]
\centering
\caption{Normalized transmission demand among beam cells}
\label{tab:2}
\scriptsize
\begin{tabular}{lllllllllll}  \toprule
\textbf{ Beam cell}  & 1 & 2 & 3 & 4 & 5 & 6 & 7\\
\midrule
\textbf{Value}$\times 10^{-2}$ & 2.21 & 6.36 & 6.36 & 3.25 & 7.40 & 3.25 & 2.09 \\
\midrule
\textbf{ Beam cell}   & 8 & 9 & 10 & 11 & 12 & 13 & 14\\
\midrule
\textbf{Value}$\times 10^{-2}$  & 4.28 & 4.29 & 5.32 & 8.44 &  4.28 & 3.25 & 2.21 \\
\midrule
\textbf{ Beam cell}   & 15 & 16 & 17 & 18 & 19 & 20 & -\\
\midrule
\textbf{Value}$\times 10^{-2}$ & 6.36 & 7.40 & 3.12 & 8.44 & 4.28 & 7.40 &-\\
\bottomrule
\end{tabular}
\end{table}

The performance metrics include queue length, inter-satellite handover frequency, and objective value of problem $\boldsymbol{P_0}$.
Moreover, to succinctly characterize the performance algorithms over time in figures, we take the average for queue length and handover frequency metrics among all cells.

\begin{figure*}[htbp]
    \centering
        \subfigure[]
		{
                \includegraphics[width=2.25in]{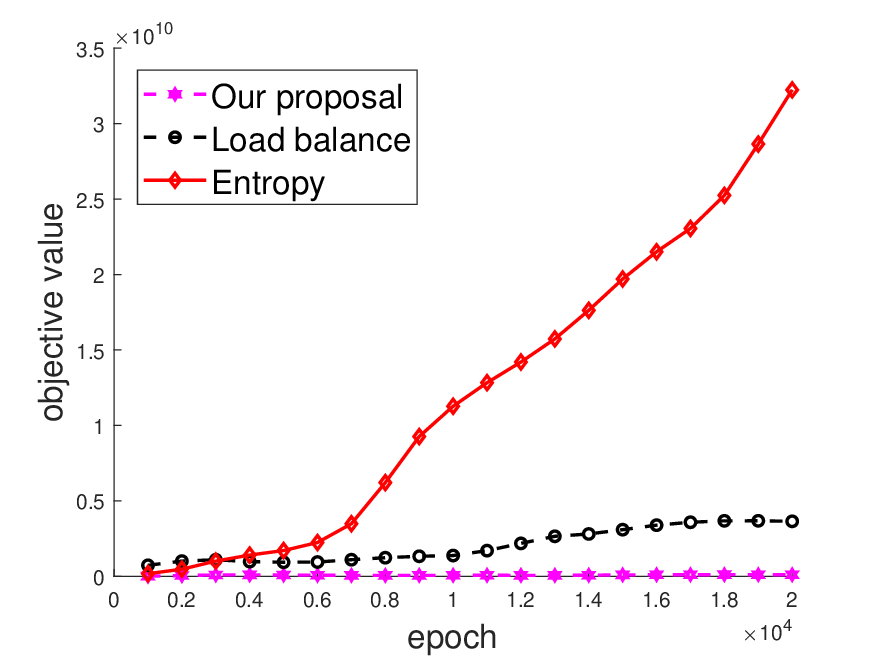}
        }
        \subfigure[]
		{
                \includegraphics[width=2.25in]{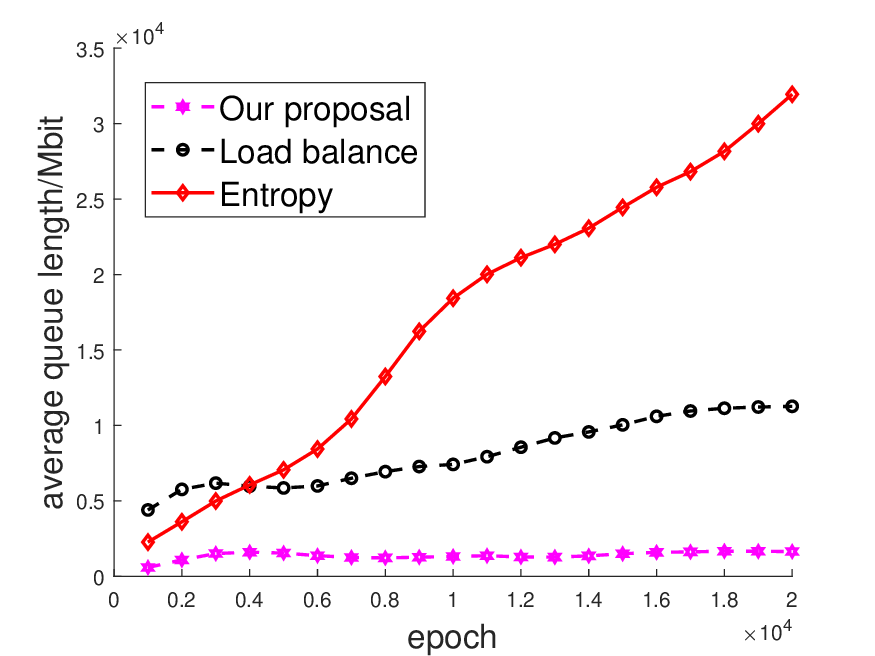}
        }
        \subfigure[]
		{
                \includegraphics[width=2.25in]{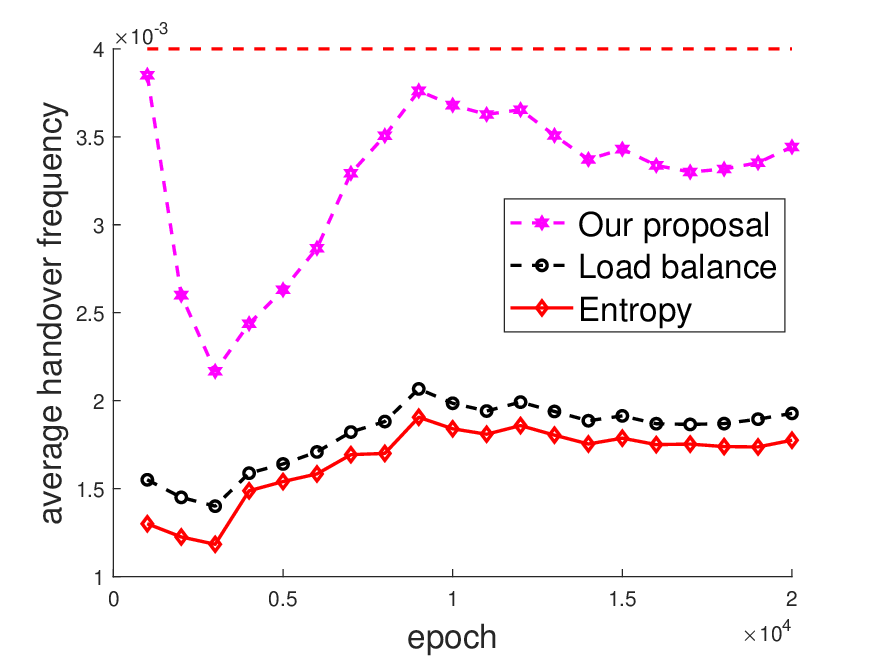}
        }
    \caption{The figures are the results of the performance of inter-satellite handover decision algorithms, where (a) presents the objective value of problem $\boldsymbol{P}_o$, (b) is the result of average data queue length of beam cells, and (c) shows the average inter-satellite handover frequency of all beam cells.}\label{fig5}
\end{figure*}

\begin{figure*}[htbp]
    \centering
        \subfigure[]
		{
                \includegraphics[width=2.25in]{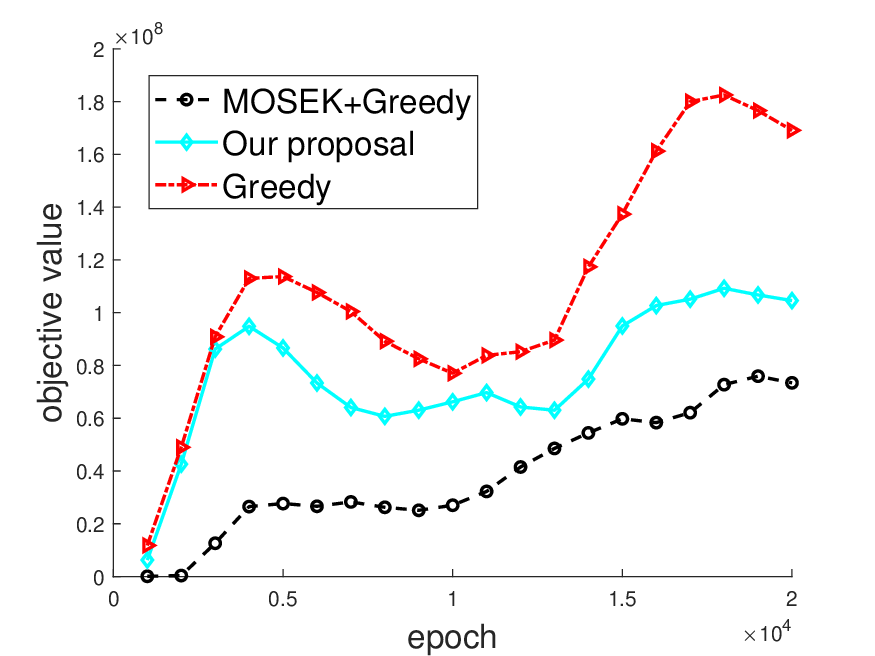}
        }
        \subfigure[]
		{
                \includegraphics[width=2.25in]{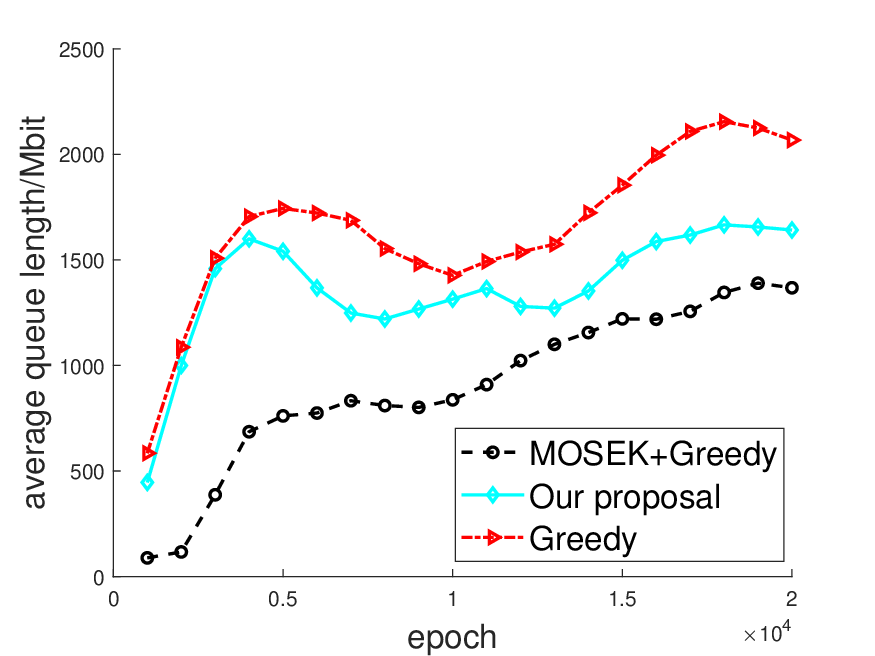}
        }
        \subfigure[]
		{
                \includegraphics[width=2.25in]{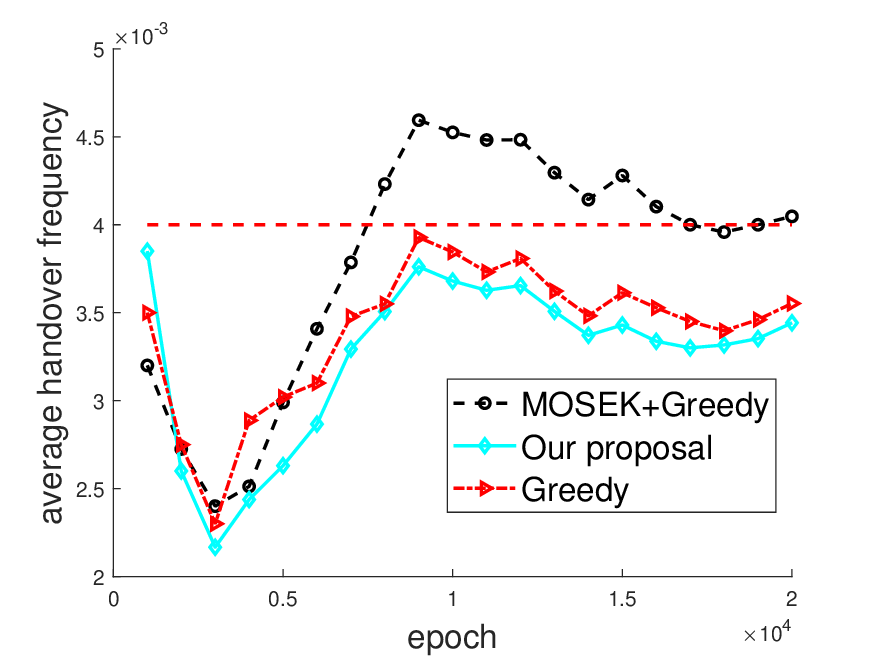}
        }
    \caption{The figures are the results of the performance of beam hopping design algorithms, where (a) presents the original objective value of problem $\boldsymbol{P}_o$, (b) is the result of average data queue length of beam cells, and (c) shows the average inter-satellite handover frequency of all beam cells.}\label{fig6}
\end{figure*}

\begin{figure*}[htbp]
    \centering
        \subfigure[]
		{
                \includegraphics[width=2.25in]{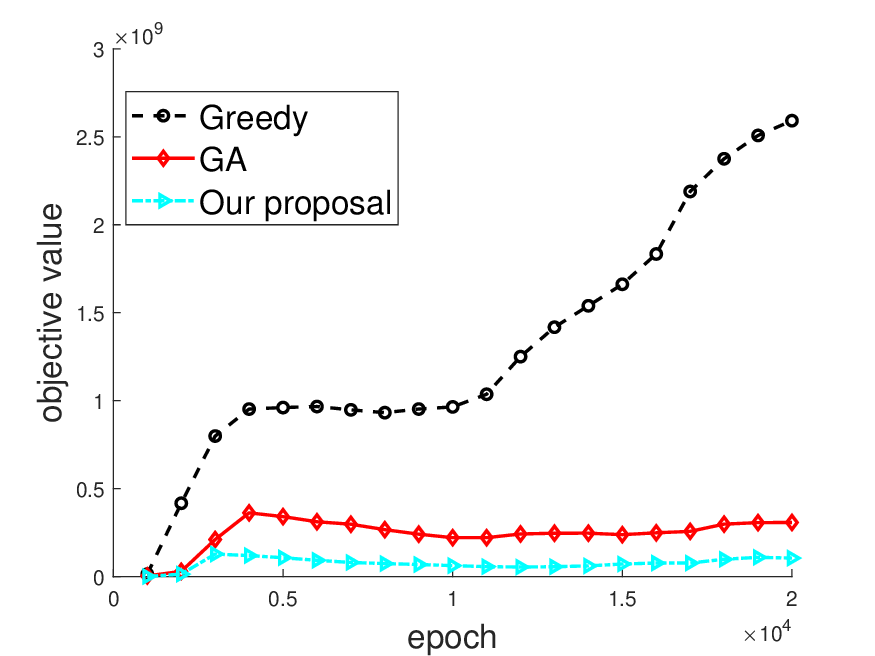}
        }
        \subfigure[]
		{
                \includegraphics[width=2.25in]{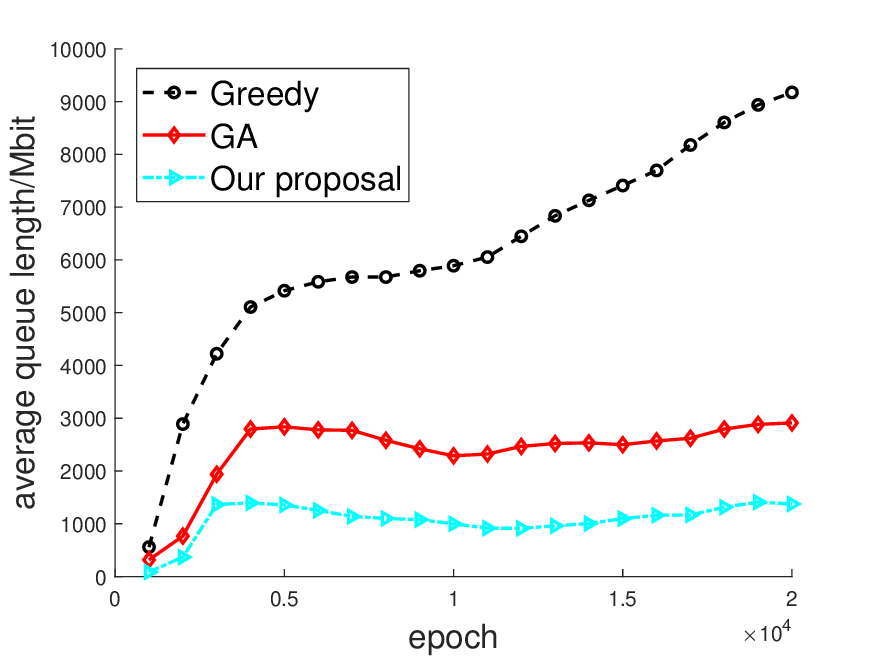}
        }
        \subfigure[]
		{
                \includegraphics[width=2.25in]{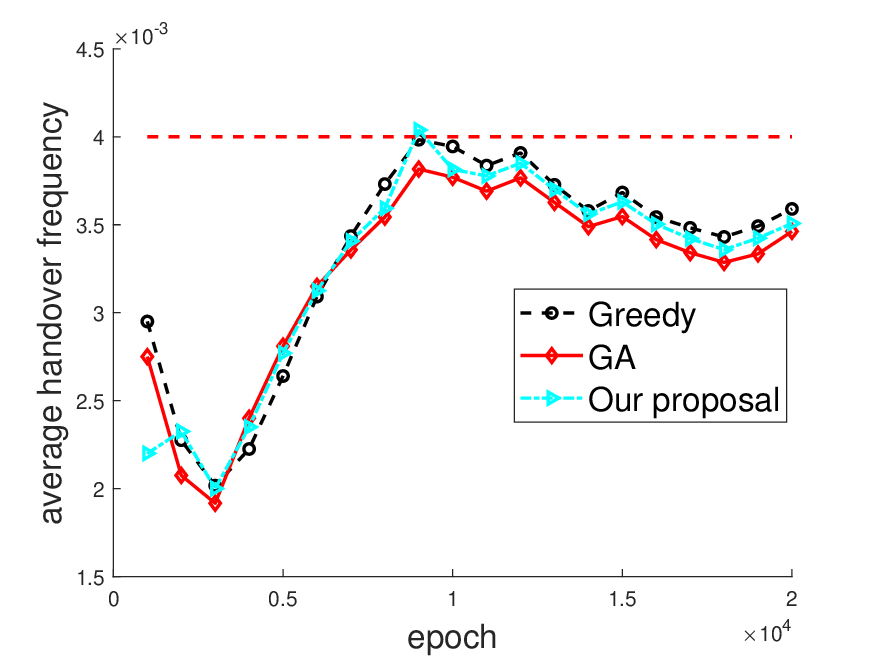}
        }
    \caption{The figures are the results of the performance of satellite-terrestrial spectrum sharing algorithms, where (a) presents the objective value of problem $\boldsymbol{P}_o$, (b) is the result of average data queue length of beam cells, and (c) shows the average inter-satellite handover frequency of all beam cells.}\label{fig7}
\end{figure*}

\begin{figure*}[htbp]
    \centering
        \subfigure[]
		{
                \includegraphics[width=2.25in]{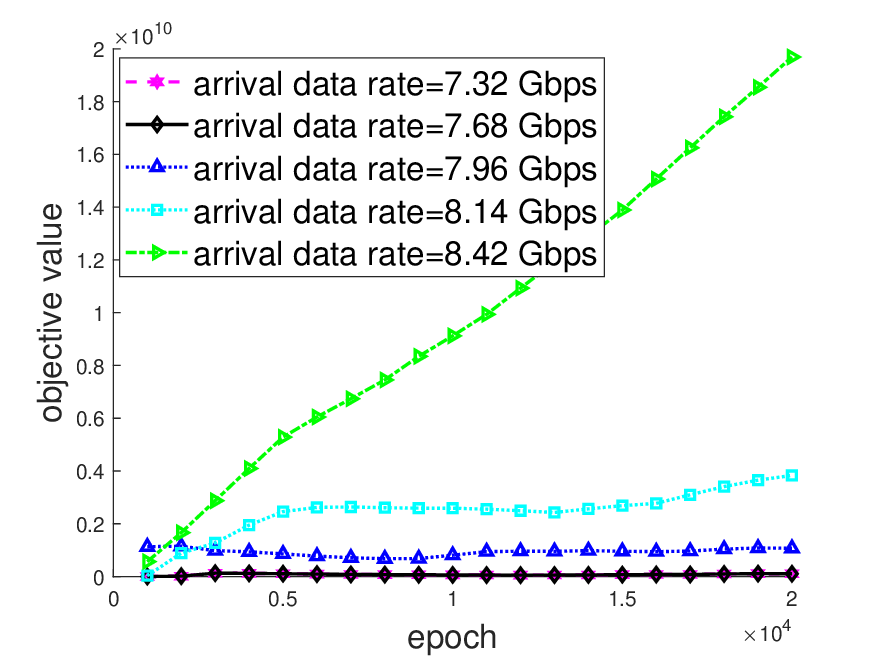}
        }
        \subfigure[]
		{
                \includegraphics[width=2.25in]{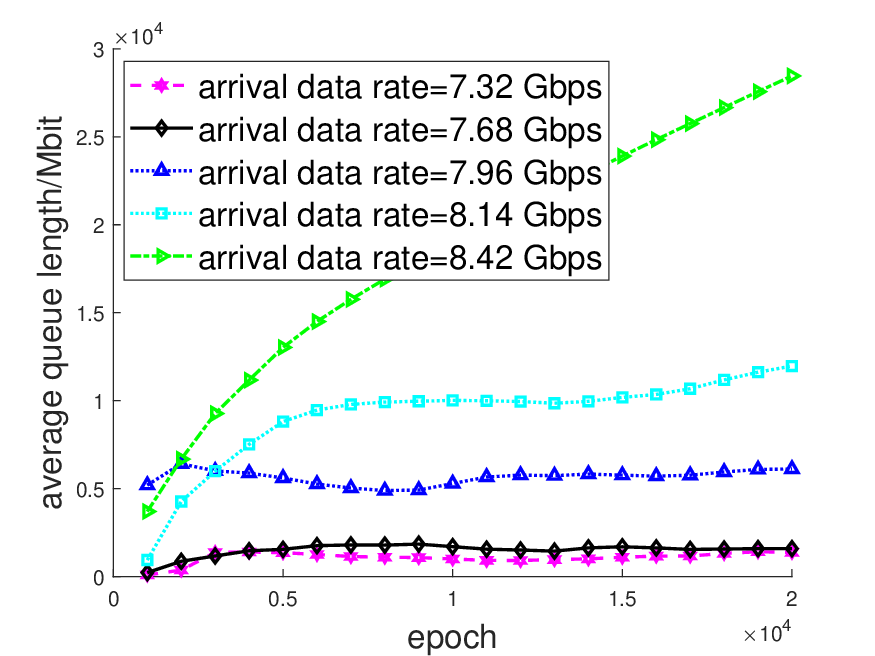}
        }
        \subfigure[]
		{
                \includegraphics[width=2.25in]{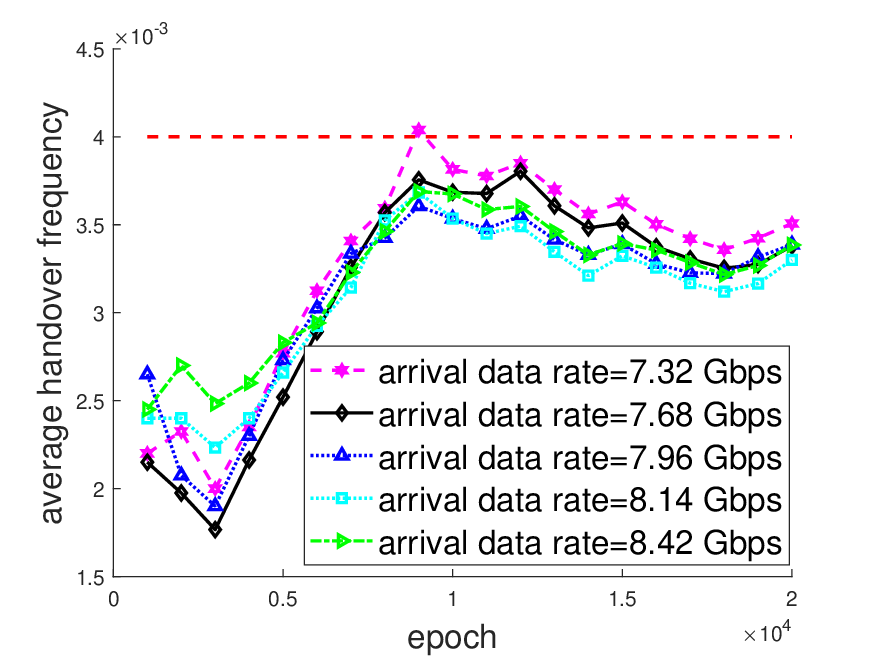}
        }
    \caption{The performance of our proposal under different the average sum of arrival data rate, where (a) presents the original objective value of problem $\boldsymbol{P}_o$, (b) is the result of average data queue length of beam cells, and (c) shows the average  inter-satellite handover frequency of all beam cells.}\label{fig8}
\end{figure*}

\subsection{The performance comparisons with existing methods}\label{sec:handover}

In this part, we first evaluate the performance of the proposed inter-satellite handover decision algorithm, where two benchmark schemes are adopted as follows.
\begin{itemize}
	\item \emph{Load Balance:} The visible satellite with the minimum traffic load is allocated to beam cells~\cite{multi_sat2}.
	\item \emph{Entropy:} Inter-satellite handover procedure is implemented by a multi-attribute handover decision scheme based on entropy in~\cite{multi-attribute}.
\end{itemize}
Specifically, inter-satellite handover procedures of benchmark schemes are only triggered when network topology changes.
For a fair comparison, benchmark schemes and our proposal adopt the proposed beam hopping deign algorithm to formulate beam hopping design, and satellite-terrestrial spectrum sharing operations are ignored in this stage.
The average sum of arrival data rate of all beam cells is 6.52 Gbps and $V=100$.
Since LEO satellites can serve beam cells within a long duration, random traffic arrival of cells leads to load imbalance among satellites.
A short data queue length level means that LEO satellites have the capacity to transmit more data under the given spectrum resource.
Fig.~\ref{fig5} (a) and (b) illustrate that our proposal surpasses benchmark schemes.
Specifically, the objective value of problem $\boldsymbol{P}_o$ and average beam cell data queue length of our proposal are reduced 97.13$\%$ and 84.47$\%$ compared with load balance strategy, which demonstrates that LEO satellite networks proactively trigger inter-satellite handover procedure contributing to high spectral efficiency.
As shown in Fig.~\ref{fig5} (c), our proposal obtains a better performance at the cost of handover frequency.
Remarkably, our proposal satisfies inter-satellite handover frequency constraint.

Then, we evaluate the performance of the proposed beam hopping design algorithm, and simulation parameters are consistent with the evaluation in Fig.~\ref{fig5}.
Here, two baselines are considered as follows.
\begin{itemize}
	\item \emph{Greedy scheme:} The gateway station greedily allocates serving beams to cells with large data queue~\cite{greedy1}.
	\item \emph{MOSEK+Greedy:} The beam hopping design is obtained based on MOSEK solver and greedy search step in\cite{multi_sat2}.
\end{itemize}
In Fig.~\ref{fig6} (a) and (b), it can be seen that the beam hopping design algorithm based on MOSEK solver and greedy search achieves the highest spectral efficiency.
However, computational complexity is a core factor for practical applications.
Due to exponential complexity of MOSEK solver~\cite{Yuan,mosek_analysis}, it is hard to be deployed in practical LEO satellite networks.
Since LEO satellites have excellent spatial isolation in the first 3000 epochs, the performance curves of our proposal and greedy algorithm are tightly close to each other.
With interference situations exacerbating, our proposal can reduce the average data queue length and the objective value of problem $\boldsymbol{P_0}$ by 20.6$\%$ and 38.16$\%$ compared with greedy algorithm at the 20000-th epoch, respectively.
Moreover, as shown in Fig.~\ref{fig6} (c), the beam hopping design outputted by the baseline based on the MOSEK solver and greedy search violates inter-satellite handover frequency constraint, whereas our proposal satisfies this constraint and achieves the minimum handover frequency.
Hence, our proposal with low complexity is more suitable for practical LEO satellite networks.

Later on, we present the performance evaluation result of the satellite-terrestrial spectrum sharing algorithm, where the average sum of arrival data rate of all beam cells is 7.32 Gbps and $V=100$, and two baselines are considered.
\begin{itemize}
	\item \emph{Greedy scheme:} The gateway station greedily sets $z_{s,c}^{f,t}=1$ for beam cells with large data queue lengths~\cite{greedy1}.
	\item \emph{GA:} Problem $\boldsymbol{P}_4$ is solved by genetic algorithm~\cite{ga1}.
\end{itemize}
Fig.~\ref{fig7} (a) illustrates the performance comparison result of the concerned objective value of problem $\boldsymbol{P}_o$.
It can be seen that our proposal outperforms baselines and obtains a high service satisfaction.
As shown in Fig.~\ref{fig7} (b), our proposal reduces the average data queue length by over 50$\%$ compared with genetic algorithm and greedy scheme.
Fig.~\ref{fig7} (c) presents the result of average inter-satellite handover frequency.
Specifically, our proposal satisfies the inter-satellite handover frequency constraint and only requires an additional handover operation compared with GA scheme.
Hence, the proposed satellite-terrestrial spectrum sharing algorithm can significantly improve spectral efficiency.

\subsection{Performance validation of the proposed method }

Subsequently, we evaluate the performance of the proposed beam management approach under different arrival data rates, and set $V=100$.
In Fig.~\ref{fig8} (a) and (b), we can reasonably conclude that our proposal can stabilize the data queue of beam cells when the arrival date rate is less than 7.96 Gbps.
Furthermore, the maximum network capacity in a scenario without interference is 8.08 Gbps , which can be calculated by equation (\ref{eq:rate}).
Hence, simulation results prove that our proposal can achieve high spectral efficiency with low computational complexity.
In addition, with the network load increasing, the average inter-satellite handover frequency slightly declines.
This is because a heavy network load usually leads to a high resource utilization rate, which lowers the invoking frequency of the proposed inter-satellite handover decision algorithm.

\begin{figure}[htbp]
    \centering

        \subfigure[]
		{
                \includegraphics[width=1.62in]{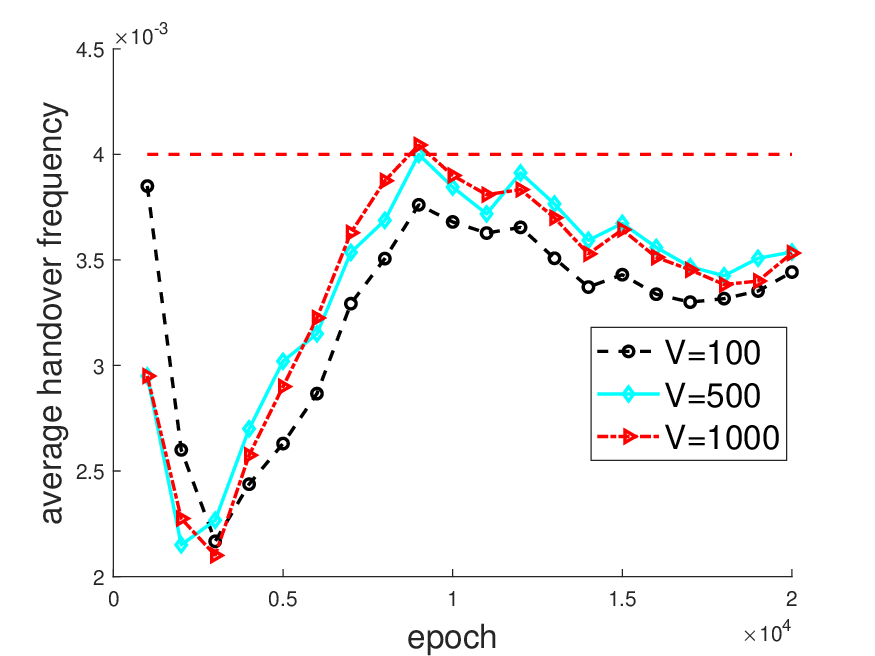}
        }
        \subfigure[]
		{
                \includegraphics[width=1.62in]{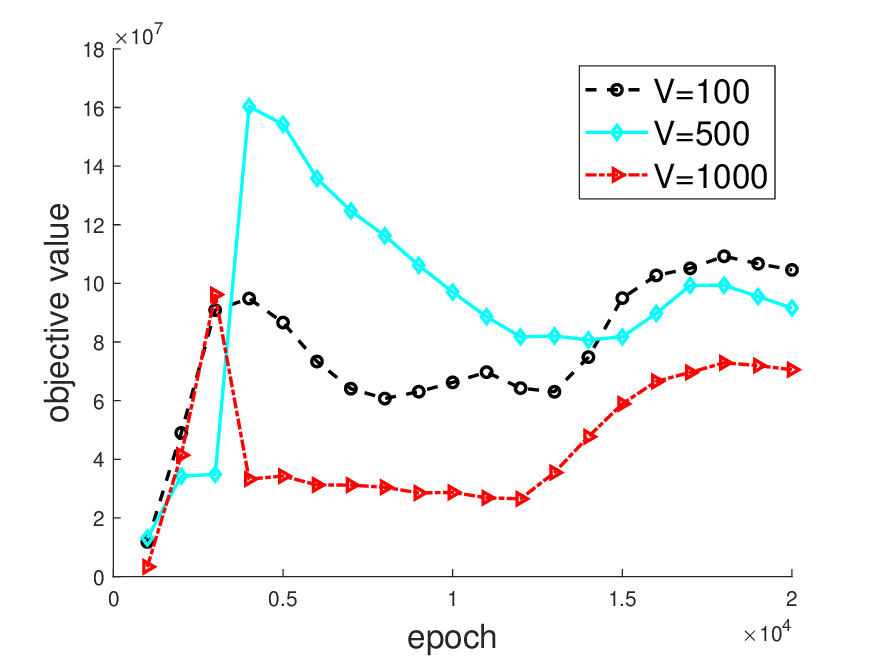}
        }

    \caption{The performance of our proposal under different $V$, where (a) shows the average inter-satellite handover frequency of all beam cells, and (b) presents the objective value of problem $\boldsymbol{P}_o$.}\label{fig9}
\end{figure}

Finally, we investigate the influence of parameter $V$ on inter-satellite handover frequency and objective value of problem $\boldsymbol{P}_o$.
Moreover, we ignore satellite-terrestrial spectrum sharing operations and set the average sum of arrival data rate of all beam cells to 6.52 Gbps.
According to Fig.~\ref{fig9}, it can be observed that a larger $V$ corresponds to a higher inter-satellite handover frequency.
The reason for this phenomenon is that a large $V$ forces the LEO satellite networks to pay more attention to control load distributions among satellites, whereas the effect of handover frequency constraint becomes weakened.

\section{Conclusions}\label{sec:con}
For a high service satisfaction with low inter-satellite handover frequency, this article has studied the long-term beam management approach for LEO satellite networks with dynamic topology, time-varying loads and complex inter-beam/satellite-terrestrial interference.
To overcome the issue caused by complex interference situations, we first proposed interference mitigation strategies to simplify interference analysis and support spectrum sharing.
Since the concerned beam management problem considers long-term service satisfaction and inter-satellite handover frequency constraints, we adopted Lyapunov drift to transform the primal problem into a series of per-epoch problems.
Considering that the transformed problem was NP-hard, we have further decomposed it into three subproblems, including inter-satellite handover decision problem, beam hopping design problem, and satellite-terrestrial spectrum sharing problem.
Subsequently, conditional handover triggering mechanism and inter-satellite handover algorithm have been designed to control the handover frequency and load distributions among LEO satellites.
Based on the proposed interference mitigation strategies, we have designed low-complexity beam hopping design and satellite-terrestrial spectrum sharing algorithms, contributing to high service satisfaction and spectral efficiency.
Finally, the simulation results have verified that the proposed beam management approach achieved a better average data queue length and service satisfaction than benchmark schemes.
Meanwhile, our proposal satisfied the maximum inter-satellite handover frequency constraints.



\begin{thebibliography}{99}


\bibitem{background2}
A. Ivanov, R. Bychkov and E. Tcatcorin,
\textquotedblleft Spatial resource management in LEO satellite,\textquotedblright~
\emph{IEEE Trans. Veh. Technol.},  vol. 69, no. 12, pp. 15623-15632, Dec. 2020.

\bibitem{background3}
M. Vaezi et al.,
\textquotedblleft Cellular, wide-area, and non-terrestrial IoT: A survey on 5G advances and the road toward 6G,\textquotedblright~
\emph{IEEE Commun. Surveys Tuts.}, vol. 24, no. 2, pp. 1117-1174, Secondquarter, 2022.

\bibitem{background4}
J. Zhu, Y. Sun and M. Peng,
\textquotedblleft Timing advance estimation in low earth orbit satellite networks,\textquotedblright~
\emph{IEEE Trans. Veh. Technol.},  doi: 10.1109/TVT.2023.3325328.

\bibitem{handover_twc}
F. Wang, D. Jiang, Z. Wang, \emph{et al}.,
\textquotedblleft Seamless handover in LEO based non-terrestrial networks: Service continuity and optimization,\textquotedblright~
\emph{IEEE Trans. Commun.}, vol. 71, no. 2, pp. 1008-1023, Feb. 2023.

\bibitem{multi_sat2}
Z. Lin, Z. Ni, L. Kuang, \emph{et al}.,
\textquotedblleft Multi-satellite beam hopping based on load balancing and interference avoidance for NGSO satellite communication systems,\textquotedblright~
\emph{IEEE Trans. Commun.}, vol. 71, no. 1, pp. 282-295, Jan. 2023.

\bibitem{space_spectrum_sharing1}
S. Kim, E. Visotsky, P. Moorut, \emph{et al}.,
\textquotedblleft Coexistence of 5G with the incumbents in the 28 and 70 GHz bands,\textquotedblright~
\emph{IEEE J. Sel. Areas Commun.}, vol. 35, no. 6, pp. 1254-1268, Jun. 2017.

\bibitem{handover_rssi1}
B. Yang, Y. Wu, X. Chu,  \emph{et al}.,
\textquotedblleft Seamless handover in software-defined satellite networking,\textquotedblright~
\emph{IEEE Commun. Lett.}, vol. 20, no. 9, pp. 1768-1771, Sep. 2016.


\bibitem{handover_graph}
Z. Wu, F. Jin, J. Luo, \emph{et al}.,
\textquotedblleft A graph-based satellite handover framework for LEO satellite communication networks,\textquotedblright~
\emph{IEEE Commun. Lett.}, vol. 20, no. 8, pp. 1547-1550, Aug. 2016.

\bibitem{min_distance}
A. Al-Hourani,
\textquotedblleft Session duration between handovers in dense LEO satellite networks,\textquotedblright~
\emph{IEEE Wireless Commun. Lett.}, vol. 10, no. 12, pp. 2810-2814, Dec. 2021.

\bibitem{handover_load1}
L. Feng, Y. Liu, L. Wu, \emph{et al}.,
\textquotedblleft A satellite handover strategy based on MIMO technology in LEO satellite networks,\textquotedblright~
\emph{IEEE Commun. Lett.}, vol. 24, no. 7, pp. 1505-1509, Jul. 2020.



\bibitem{multi-attribute}
S. Zhang, A. Liu and X. Liang,
\textquotedblleft A multi-objective satellite handover strategy based on entropy in LEO satellite communications,\textquotedblright~
in \emph{Proc. IEEE 6th Int. Conf. Comput. Commun. (ICCC)}, Chengdu, China, Feb. 2020, pp. 723-728.




\bibitem{handover_centralized}
S. Zhang, A. Liu, C. Han,  \emph{et al}.,
\textquotedblleft A network-flows-based satellite handover strategy for LEO satellite networks,\textquotedblright~
\emph{IEEE Wireless Commun. Lett.}, vol. 10, no. 12, pp. 2669-2673, Dec. 2021.


\bibitem{one_sat2}
A. Wang, L. Lei, E. Lagunas, \emph{et al}.,
\textquotedblleft Joint optimization of beam-hopping design and NOMA-assisted transmission for flexible satellite systems,\textquotedblright~
\emph{IEEE Trans. Wireless Commun.}, vol. 21, no. 10, pp. 8846-8858, Oct. 2022.

\bibitem{one_sat3}
C. Zhang, X. Zhao and G. Zhang,
\textquotedblleft Joint precoding schemes for flexible resource allocation in high throughput satellite systems based on beam hopping,\textquotedblright~
\emph{Commun., China.}, vol. 18, no. 9, pp. 48-61, Sep. 2021.


\bibitem{multi_sat3}
T. Li, R. Yao, Y. Fan, \emph{et al}.,
\textquotedblleft Multiobjective optimization for beam hopping and power allocation in dual satellite cooperative transmission networks,\textquotedblright~
\emph{ IEEE Syst. J.}, pp. 1-12, 2-23, doi: 10.1109/JSYST.2023.3269442.


\bibitem{space_spectrum_sharing2}
C. Zhang, C. Jiang, L. Kuang, \emph{et al}.,
\textquotedblleft Spatial spectrum sharing for satellite and terrestrial communication networks,\textquotedblright~
\emph{IEEE Trans. Aerosp. Electron. Syst.}, vol. 55, no. 3, pp. 1075-1089, Jun. 2019.



\bibitem{frequency_spectrum_sharing2}
J. Du, C. Jiang, H. Zhang, \emph{et al}.,
\textquotedblleft Auction design and analysis for SDN-based traffic offloading in hybrid satellite-terrestrial networks,\textquotedblright~
\emph{IEEE J. Sel. Areas Commun.}, vol. 36, no. 10, pp. 2202-2217, Oct. 2018.


\bibitem{noma_spectrum_sharing1}
Y. Zhang, H. Zhang, H. Zhou, \emph{et al}.,
\textquotedblleft Resource allocation in terrestrial-satellite-based next generation multiple access networks with interference cooperation,\textquotedblright~
\emph{IEEE J. Sel. Areas Commun.}, vol. 40, no. 4, pp. 1210-1221, Apr. 2022.



\bibitem{cross_polar_isolation}
S. Chen, S. Sun, D. Miao, \emph{et al}.,
\textquotedblleft The trends, challenges and key technologies of beam-space multiplexing in the integrated terrestrial-satellite communication for B5G and 6G,\textquotedblright~
\emph{IEEE Trans. Wireless Commun.}, early access Aug. 1, 2022, doi: 10.1109/MWC.004.2200085.


\bibitem{traffic_load1}
C. Yao, C. Yang and Z. Xiong,
\textquotedblleft Energy-saving predictive resource planning and allocation,\textquotedblright~
\emph{IEEE Trans. Commun.}, vol. 64, no. 12, pp. 5078-5095, Dec. 2016.

\bibitem{traffic_load2}
X. Wang, B. Lyu, C. Guo,  \emph{et al}.,
\textquotedblleft A base station sleeping strategy in heterogeneous cellular networks based on user traffic prediction,\textquotedblright~
\emph{IEEE Trans. Green Commun. Netw.}, doi: 10.1109/TGCN.2023.3324486.


\bibitem{lyap3}
L. Feng, A. Ali, M. Iqbal, \emph{et al}.,
\textquotedblleft Optimal haptic communications over nanonetworks for e-health systems,\textquotedblright~
\emph{IEEE Trans. Ind. Informat.}, vol. 15, no. 5, pp. 3016-3027, May 2019.


\bibitem{lyap2}
Y. Zhang, C. Li, T. H. Luan,  \emph{et al}.,
\textquotedblleft A mobility-aware vehicular caching scheme in content centric networks: Model and optimization,\textquotedblright~
\emph{IEEE Trans. Veh. Technol.}, vol. 68, no. 4, pp. 3100-3112, Apr. 2019.

\bibitem{NP-hard}
T. Deng, X. Liu, H. Zhou,  \emph{et al}.,
\textquotedblleft Global resource allocation for high throughput and low delay in high-density VANETs,\textquotedblright~
\emph{IEEE Trans. Wireless Commun.}, vol. 21, no. 11, pp. 9509-9518, Nov. 2022.

\bibitem{NP-hard2}
Z. Zhao, G. Verma, C. Rao,  \emph{et al}.,
\textquotedblleft Link scheduling using graph neural networks,\textquotedblright~
\emph{IEEE Trans. Wireless Commun.}, 2022.


\bibitem{Yuan}
S. Yuan, Y. Sun, and M. Peng,
\textquotedblleft Joint network function placement and routing optimization in dynamic software-defined satellite-terrestrial integrated networks,\textquotedblright~
\emph{IEEE Trans. Wireless Commun.}, pp. 1-15, Oct. 2023, doi: 10.1109/TWC.2023.3324729.


\bibitem{sparrow_search}
W. Yu, H. Kang, G. Sun, \emph{et al}.,
\textquotedblleft Bio-inspired feature selection in brain disease detection via an improved sparrow search algorithm,\textquotedblright~
\emph{IEEE Trans. Instrum. Meas.},  vol. 72, pp. 1-15, Dec. 2022.

\bibitem{sparrow_search2}
A. G. Gad, K. M. Sallam, R. K. Chakrabortty,\emph{et al}.,
\textquotedblleft An improved binary sparrow search algorithm for feature selection in data classification,\textquotedblright~
\emph{Neural Comput. Appl.}, vol. 34, pp. 15705-15752, Apr. 2022.

\bibitem{mosek_analysis}
E. El Haber, T. M. Nguyen and C. Assi,
\textquotedblleft Joint optimization of computational cost and devices energy for task offloading in multi-tier edge-clouds,\textquotedblright~
\emph{IEEE Trans. Commun.}, vol. 67, no. 5, pp. 3407-3421, May. 2019.



\bibitem{38.811}
3GPP, \textquotedblleft Study on new radio (NR) to support non terrestrial networks,\textquotedblright~ \emph{3GPP, Technical Report 38.811, v15.4.0}, 2020.

\bibitem{ue_Antenna}
\textquotedblleft Reference radiation pattern of earth station antennas in the fixed-satellite service for use in coordination and interference assessment in the frequency range from 2 to 31 GHz,\textquotedblright~
\emph{Int. Telecommun. Union, Geneva, Switzerland, ITU-Recommendation S.465}, 2020.


\bibitem{greedy1}
J. Zhang \emph{et al}.,
\textquotedblleft System-level evaluation of beam hopping in NR-based LEO satellite communication system,\textquotedblright~
in \emph{Proc. 2023 IEEE Wireless Communications and Networking Conference (WCNC)}, Glasgow, United Kingdom, May, 2023, pp. 1-6.

\bibitem{ga1}
L. Wang, X. Hu, S. Ma,  \emph{et al}.,
\textquotedblleft Dynamic beam hopping of multi-beam satellite based on genetic algorithm,\textquotedblright~
in \emph{Proc. 2020 IEEE ISPA/BDCloud/SocialCom/SustainCom}, Exeter, United Kingdom, Jun. 2020, pp. 1364-1370.



\end{thebibliography}
\end{document}